\newif\ifdraft \draftfalse
\newif\iffull \fulltrue
\makeatletter \@input{tex.flags} \makeatother
\newcommand{\D}{\ensuremath{\mathcal{D}}\xspace}
\newcommand{\Dk}{\ensuremath{\mathcal{D}^k}\xspace}
\definecolor{DarkGreen}{rgb}{0.1,0.5,0.1}
\definecolor{DarkRed}{rgb}{0.5,0.1,0.1}
\definecolor{DarkBlue}{rgb}{0.1,0.1,0.5}
\newcommand\cA{\mathcal{A}}
\newcommand\cM{\mathcal{M}}
\newcommand\cP{\mathcal{P}}
\newcommand\cQ{\mathcal{Q}}
\newcommand\cR{\mathcal{R}}
\newcommand{\bin}{\text{Bin}}
\newcommand{\score}[2]{\texttt{score$(#1,#2)$}}
\newcommand{\schoolda}{\texttt{DA-School}\xspace}
\newcommand{\privateda}[2]{\texttt{Private-DA-School$(#1,#2)$}\xspace}
\newcommand{\counter}{{{\bf Counter}}\xspace}
\newcommand{\M}{\ensuremath{\mathcal{M}}\xspace}
\newcommand{\nomatch}{\odot}
\DeclareMathOperator{\poly}{poly}
\DeclareMathOperator{\polylog}{polylog}
\newcommand{\eps}{\varepsilon}
\def\epsilon{\varepsilon}
\DeclareMathOperator{\Lap}{Lap}
\renewcommand{\hat}{\widehat}
\DeclareMathOperator*{\argmax}{\mathrm{argmax}}
\DeclareMathOperator*{\secondargmax}{\mathrm{argsecondmax}}
\newcommand{\INDSTATE}[1][1]{\STATE\hspace{#1\algorithmicindent}}
\newtheorem*{theorem*}{Theorem}
\declaretheorem[
  name=Theorem,
  refname={theorem, theorems},
  Refname={Theorem, Theorems}]{theorem}
\declaretheorem[
  name=Lemma,
  refname={lemma, lemmas},
  Refname={Lemma, Lemmas}]{lemma}
\declaretheorem[
  name=Example,
  refname={example, examples},
  Refname={Example, Examples}]{example}
\declaretheorem[
  name=Remark,
  refname={remark, remarks},
  Refname={Remark, Remarks}]{remark}
\declaretheorem[
  name=Definition,
  refname={definition, definitions},
  Refname={Definition, Definitions}]{definition}
\title{Approximately Stable, School Optimal, and Student-Truthful Many-to-One
  Matchings\\ (via Differential Privacy)\thanks{Kannan was partially supported by NSF grant NRI-1317788. email: kannan@cis.upenn.edu.
    Morgenstern was partially supported by NSF
    grants CCF-1116892 and CCF-1101215, as well as a Simons Award for
    Graduate Students in Theoretical Computer Science.  Contact
    information: J. Morgenstern, Computer Science Department, Carnegie
    Mellon University, \texttt{jamiemmt@cs.cmu.edu}. Roth was partially supported by an NSF CAREER award, NSF Grants CCF-1101389 and
    CNS-1065060, and a Google Focused Research Award. Email: \texttt{aaroth@cis.upenn.edu}. Wu was supported in part by NSF Grants CCF-1101389. Email: \texttt{wuzhiwei@cis.upenn.edu}}}
\author[1]{Sampath Kannan}
\author[2]{Jamie Morgenstern}
\author[1]{Aaron Roth}
\author[1]{Zhiwei Steven Wu}
\affil[1]{Department of Computer and Information Science\\
 University of Pennsylvania}
\affil[2]{Computer Science Department\\
Carnegie Mellon University
}
\begin{document}
\maketitle

\begin{abstract}
  We present a mechanism for computing asymptotically stable school
  optimal matchings, while guaranteeing that it is an asymptotic
  dominant strategy for every student to report their true preferences
  to the mechanism. Our main tool in this endeavor is
  \emph{differential privacy}: we give an algorithm that coordinates a
  stable matching using differentially private signals, which lead to
  our truthfulness guarantee. This is the first setting in which it is
  known how to achieve nontrivial truthfulness guarantees for students
  when computing school optimal matchings, assuming \emph{worst-case}
  preferences (for schools and students) in large
  markets. 

\end{abstract}
\vfil
\pagebreak 

\section{Introduction}

In this paper we consider the important problem of computing
many-to-one stable matchings -- a matching solution concept used for
diverse applications, including matching students to schools
\citep{abdulkadiroglu2009strategy}, and medical residents to hospitals
\citep{roth1984evolution,roth2002redesign}. There are two sides of
such a market: we will refer to the side with multiple positions as
the \emph{schools} and the other side as the \emph{students}. The goal
is to find a feasible assignment $\mu$ of students to schools -- each
student $a$ can be matched to at most $1$ school, but each school $u$
can be potentially matched to up to $C_u$ students, where $C_u$ is the
\emph{capacity} of school $u$. We would like to find a matching that
is \emph{stable}. Informally, when each student $a$ has a preference
ordering $\succ_a$ over schools, and each school $u$ has a preference
ordering $\succ_u$ over students, then an assignment $\mu$ forms a
stable matching if it is feasible, and there is no student-school pair
$(a, u)$ such that they are unmatched $(\mu(a) \neq u)$, but such that
they would mutually prefer to deviate from the proposed matching $\mu$
and match with each other.

The set of stable many-to-one-matchings have a remarkable structural
property: there exists a \emph{school optimal} and a \emph{student
  optimal} stable matching -- i.e.\ a stable matching that all schools
simultaneously prefer to all other stable matchings, and a stable
matching that all students simultaneously prefer to all other stable
matchings. Moreover, these matchings are easy to find, with the
school-proposing (respectively, student proposing) version of the
Gale-Shapley deferred acceptance algorithm
\citep{gale1962college}. Unfortunately, the situation is not quite as
nice when student incentives are taken into account. Even in the
1-to-1 matching case (i.e.\ when capacities $C_u = 1$ for all schools),
there is no mechanism which makes truthful reporting of one's
preferences a dominant strategy for both sides of the market
\citep{roth1982economics}. In the many-to-one matchings case, things
are even worse: an algorithm which finds the school optimal stable
matching does not incentivize truthful reporting for either the
students or the schools~\citep{roth1984evolution}.

Because of this, a literature has emerged studying the incentive
properties of stable matching algorithms under \emph{large market
  assumptions}
(e.g.\ ~\citet{immorlica2005marriage,lee2011incentive,kojima2009incentives}). In
general, this literature has taken the following approach: make
restrictive assumptions about the market (e.g.\ that students
preference lists are only of constant length and are drawn uniformly
at random), and under those assumptions, prove that an algorithm which
computes exactly the school optimal stable matching makes truthful
reporting a dominant strategy for a $1 - o(1)$ fraction of student
participants (generally even under these assumptions, the schools
still have incentive to misreport if the algorithm computes the
school optimal stable matching).

In this paper we take a fundamentally different approach. We make
absolutely no assumptions about student or school preferences,
allowing them to be worst-case. We also insist on giving incentive
guarantees to \emph{every} student, not just most students. We compute
(in a sense to be defined) an \emph{approximately} stable and
\emph{approximately} school optimal matching using an algorithm with a
particular insensitivity property (differential privacy), and show
that truthful reporting is an \emph{approximately} dominant strategy
for \emph{every} student in the market. These approximations become
perfect as the size of the market grows large. Our notion of a ``large
market'' requires only that the capacity of each school $C_u$ grows with
(the square root of) the number of
schools, and (logarithmically) with the number of students, and does
not require any assumption on how preferences are generated.

\subsection{Our Results and Techniques}
We recall the standard notion of stability in a many-to-one matching
market with $n$ students $a_i \in A$ and $m$ schools $u_j \in U$, each
with capacity $C_j$.
\begin{definition} A matching $\mu:A\rightarrow U\cup \{\nomatch\}$ is
  \emph{feasible} and \emph{stable} if:
\begin{enumerate}
\item (Feasibility) For each $u_j \in U$, $|\{i : \mu(a_i) = u_j\}|
\leq C_j$ \label{cond:feasibility}
\item (No Blocking Pairs with Filled Seats) For each $a_i \in A$, and
each $u_j \in U$ such that $\mu(a_i) \neq u_j$, either $\mu(a_i)
\succ_{a_i} u_j$ or for every student $a'_i \in \mu^{-1}(u_j)$, $a'_i
\succ_{u_j} a_i$. \label{cond:filledblocking}
\item (No Blocking Pairs with Empty Seats) For every $u_j \in U$ such
that $|\mu^{-1}(u_j)| < C_j$, and for every student $a_i \in A$ such
that $a_i \succ_{u_j} \nomatch$, $\mu(a_i) \succ_{a_i}
u_j$. \label{cond:emptyblocking}
\end{enumerate}
\end{definition}

Our notion of approximate stability relaxes condition
\ref{cond:emptyblocking}. Informally, we still require that there be
no blocking pairs among students and \emph{filled} seats, but we allow
each school to possibly have a small number of empty seats. We view
this as a mild condition, reflecting the reality that schools are not
able to perfectly manage yield, and are often willing to accept a
small degree of under-enrollment.

\begin{definition}[Approximate Stability]
\label{def:approxstable} A matching $\mu:A\rightarrow U\cup \{\nomatch\}$ is
\emph{feasible} and $\alpha$-approximately \emph{stable} if it
satisfies conditions \ref{cond:feasibility} and
\ref{cond:filledblocking} (\emph{Feasibility} and \emph{No Blocking
Pairs with Filled Seats}) and:
\begin{enumerate} \setcounter{enumi}{2}
  \item (No Blocking pairs with Empty Seats at Under-Enrolled
Schools) For every $u_j \in U$ such that $|\mu^{-1}(u_j)| <
(1-\alpha)C_j$, and for every student $a_i \in A$ such that $a_i
\succ_{u_j} \nomatch$, $\mu(a_i) \succ_{a_i}
u_j$. \label{cond:approxemptyblocking}
\end{enumerate}
\end{definition}

We also employ a strong notion of approximate dominant strategy
truthfulness, related to first order stochastic dominance --
informally, we say that a mechanism is $\eta$-approximately dominant
strategy truthful if no agent can gain more than $\eta$ in expectation
(measured by \emph{any} cardinal utility function consistent with his
ordinal preferences) by misreporting his preferences to the mechanism.

Finally, we, define a notion of school optimality that applies to
approximately stable matchings. Informally, we say that an
approximately stable matching $\mu$ (in the above sense) is
\emph{school dominant} if when compared to the school optimal
\emph{exactly} stable matching $\mu'$, for every school $u_j$, every
student $a_i$ matched to $u_j$ in $\mu$ is strictly preferred by $u_j$
to any student matched to $\mu_j$ in $\mu'$ but not in $\mu$.

We can now give an informal statement of our main result.

\begin{theorem}[Informal]\label{thm:school-truth}
  There is an algorithm for computing feasible and
  $\alpha$-approximately stable \emph{school dominant} matchings that
  makes truthful reporting an $\eta$-approximate dominant strategy for
  every student in the market, under the condition that for every
  school $u$, the capacity is sufficiently large:
$$C_u = \Omega\left(\frac{\sqrt{m}}{\eta \alpha}\cdot \mathrm{polylog}(n)\right)$$
\end{theorem}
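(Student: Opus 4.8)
The plan is to replace the school-proposing deferred acceptance algorithm \schoolda (which computes the school-optimal stable matching but is manipulable by students) with a differentially private variant, \privateda{\db}{\eps}, that reveals to each student only a \emph{differentially private summary} of the other students' reports, and then to derive truthfulness from the insensitivity of this summary. Concretely, I would exploit the cutoff characterization of stable matchings: the school-optimal stable matching is described by a vector of admission thresholds $(t_j)_j$, one per school, where $u_j$ admits $a_i$ precisely when $a_i$'s score at $u_j$ is at least $t_j$, and each student is matched to the most-preferred school (under $\succ_{a_i}$) that admits her. The mechanism computes \emph{noisy} thresholds $\hat t_j$ privately---e.g.\ by binary-searching each threshold using Laplace-noised counts of how many students demand $u_j$ at a candidate cutoff---and then matches each student to her favorite affordable school under the posted $\hat t$. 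These thresholds play the role of a privacy-preserving ``billboard'': they are computed from everyone's reports but released only in aggregate, noised form.

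First I would establish accuracy. Private counting introduces additive error on the order of the Laplace noise, so I would bias the thresholds slightly upward, ensuring that no school is over-enrolled (Condition~\ref{cond:feasibility}) and that every admitted student is among the true top-$C_j$ for her school. This simultaneously delivers \emph{No Blocking Pairs with Filled Seats} (Condition~\ref{cond:filledblocking}) and \emph{school dominance}: every student matched to $u_j$ clears at least the exactly-stable threshold $t_j^{\ast}$, hence is preferred by $u_j$ to anyone admitted under $\mu'$ but displaced in $\mu$. The only cost is that a school may fall short of capacity by up to the noise magnitude, so to stay within the tolerated $\alpha C_j$ slack of Condition~\ref{cond:approxemptyblocking} I need the per-threshold noise to be $O(\alpha C_j)$. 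This is where $\sqrt{m}$ enters: there are $\Theta(m)$ thresholds, each refined over $\polylog(n)$ binary-search steps, so maintaining a total privacy budget $\eps$ across these $\Theta(m\cdot\polylog(n))$ adaptive queries via advanced composition forces each query to run at privacy $\approx \eps/\sqrt{m\,\polylog(n)}$, i.e.\ with noise $\approx \sqrt{m}\,\polylog(n)/\eps$. A union bound over the $n$ students and all queries supplies the remaining $\polylog(n)$ factor in the high-probability accuracy guarantee.

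Next I would prove approximate truthfulness via the standard ``differential privacy implies approximate truthfulness'' argument. Fix the threshold vector $\hat t$. Conditioned on $\hat t$, a student's realized match is her favorite affordable school \emph{under her reported preference}, so honest reporting is an exact best response to any fixed $\hat t$: any misreport can only cause her to decline a school she truly prefers or accept one she truly ranks lower. Her report does influence $\hat t$, but only through the private counts, so by $\eps$-differential privacy the distribution over $\hat t$ moves by at most a multiplicative $e^{\eps}$ factor under any unilateral deviation. Bounding the change in her expected utility---with respect to any cardinal utility consistent with $\succ_{a_i}$ and normalized to $[0,1]$---by the change in the outcome distribution yields a manipulation gain of at most $O(\eps)$, so taking $\eps = \Theta(\eta)$ gives the $\eta$-approximate dominant-strategy guarantee. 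Combining $\eps \asymp \eta$ with the accuracy requirement that the noise be $O(\alpha C_j)$ and that the noise is $\asymp \sqrt{m}\,\polylog(n)/\eps$ produces exactly $C_u = \Omega\!\left(\frac{\sqrt{m}}{\eta\alpha}\polylog(n)\right)$.

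The step I expect to be most delicate is decoupling the two channels through which a student's report matters: her \emph{own} selection among affordable schools, which must remain under her control for the matching to be meaningful, and her effect on the \emph{shared} thresholds, which must be privatized so that manipulation is bounded. Making this split rigorous---showing that the only avenue for profitable manipulation is perturbing $\hat t$, and that this avenue is precisely what differential privacy controls---is the crux, and amounts to a ``billboard lemma'' tailored to matchings. A secondary technical point is verifying that the upward threshold bias yields feasibility, the no-filled-blocking-pair condition, and school dominance all at once without pushing under-enrollment past $\alpha C_j$; this requires carefully coupling the noisy binary search to the exactly-stable cutoffs $t_j^{\ast}$ so that $\hat t_j$ sits in a narrow band above $t_j^{\ast}$ with high probability.
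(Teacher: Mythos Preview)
Your high-level framework coincides with the paper's: compute a differentially private vector of admission thresholds, post it as a ``billboard,'' let each student enroll at her favorite affordable school, and derive $\eta$-approximate truthfulness from the insensitivity of the threshold distribution combined with the observation that, for any \emph{fixed} threshold vector, honest selection is an exact best response. The paper's truthfulness theorem is precisely the argument you sketch in your third paragraph, and your identification of the ``two channels'' (own selection versus effect on shared thresholds) is exactly the right decomposition.

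The gap is in the algorithm that actually produces the thresholds. You propose ``binary-searching each threshold using Laplace-noised counts of how many students demand $u_j$ at a candidate cutoff,'' but demand at $u_j$ at cutoff $c$ depends on the cutoffs at \emph{every other} school: a student demands $u_j$ only if $u_j$ is her favorite among all schools whose thresholds she clears. Per-coordinate binary search therefore does not converge to the stable threshold vector, and nothing in your outline handles this coupling. Consequently the key structural claim you rely on---that $\hat t_j$ ends up in a narrow band \emph{above} $t_j^\ast$ (which is what delivers feasibility, no filled-seat blocking pairs, and school dominance all at once)---is exactly what has to be proved about the algorithm, and it does not follow from independent binary search. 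You flag this as a ``secondary technical point''; it is in fact the main obstacle.

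The paper sidesteps the interdependence by running the school-proposing deferred acceptance dynamics themselves, with one private \emph{streaming counter} per school tracking current enrollment. Each school lowers its threshold one step whenever its noisy counter reads below $C_u$ minus the counter-error bound $E$. Because thresholds only descend and the noisy process halts weakly earlier than exact DA, the output is an intermediate matching of exact DA; a short lemma shows every such intermediate matching is school-dominant, and in particular every private threshold sits at or above the exact stable threshold. The $\sqrt{m}$ factor arises not from ``$m$ thresholds times $\polylog$ search depth'' but from a per-student sensitivity bound: over the entire run a student switches schools at most $m$ times (each time to a strictly more-preferred school), so her data flips at most $2m$ bits across all $m$ counter streams, and advanced composition over those $O(m)$ bits yields noise of order $\sqrt{m}\cdot\polylog(n)/\eps$ per counter. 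This sensitivity-based accounting is also what lets the paper immediately sharpen $\sqrt{m}$ to $\sqrt{k}$ when preference lists have length $k$---a corollary your query-counting derivation would not recover, since there are still $m$ thresholds to compute.
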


\begin{remark}
  Note that no assumptions are needed about either school or student
  preferences, which can be arbitrary. The only large market
  assumption needed is that the capacity $C_u$ of each school is
  large. If, as the market grows, school capacities grow slightly
  faster than the square root of the number of schools, then both $\eta$
  and $\alpha$ can be taken to tend to 0 in the limit.
\end{remark}

This result differs from the standard large market results in several
ways. First, and perhaps most importantly, the result is worst-case
over all possible preferences of both schools and students. Second,
the guarantee states that \emph{no} student may substantially gain by
by misreporting her preferences; previous
results~\citep{immorlica2005marriage, kojima2009incentives} show that
only a subconstant fraction of students might have (substantial)
incentive to deviate. In exchange for these strong guarantees, we
relax our notion of stability and school optimality to approximate
notions, which can be taken to be exact in the limit as the market
grows large (under the condition that school capacities grow at a
sufficiently fast rate).

When we do make some of the assumptions on student preferences made in
previous work, we get stronger claims than the one above. For example,
when the length of the preference lists of students are taken to be
bounded, as they are in~\cite{immorlica2005marriage}
and~\cite{kojima2009incentives} we can remove our dependence on the
number of schools:

\begin{theorem}[Informal]\label{thm:k-list}
  Under the condition that all students have preference lists over at
  most $k$ schools (and otherwise prefer to be unmatched), there is an
  algorithm for computing feasible and $\alpha$-approximately stable
  \emph{school dominant} matchings that makes truthful reporting an
  $\eta$-approximate dominant strategy for every student in the
  market, under the condition that for every school $u$, the capacity
  is sufficiently large:
$$C_u = \Omega\left(\frac{\sqrt{k}}{\eta \alpha}\cdot \mathrm{polylog}(n)\right)$$
\end{theorem}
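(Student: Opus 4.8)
The plan is to reuse the entire machinery developed for \Cref{thm:school-truth} and only sharpen the accounting of how a single student's report propagates through the private coordinator. Recall that the mechanism behind \Cref{thm:school-truth} runs the school-proposing deferred-acceptance process but maintains each school's admission threshold (equivalently, its tentative enrollment count) through differentially private noisy counters, and publishes these thresholds as the \emph{only} channel by which one student's report can influence another student's outcome. First I would restate that mechanism verbatim, changing nothing about the per-counter Laplace scale $b$ or the threshold-update rule. The one structural fact I exploit is a localization property: a student whose reported list has length at most $k$ ever proposes to (is ever a candidate for admission at) at most $k$ distinct schools over the whole run, and is irrelevant to the counters of the remaining $m-k$ schools she does not rank.

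The substantive change is in the privacy analysis. In the general case the vector of published thresholds is jointly differentially private because altering one student's report perturbs the counters of up to $m$ schools, and advanced composition over those coordinates yields a privacy loss scaling like $\sqrt{m}$ times the per-counter cost. Under the length-$k$ restriction the perturbation is supported on at most $k$ coordinates, so the identical advanced-composition argument gives $\eps = O(\sqrt{k}\,\polylog(n)/b)$, with the $\polylog(n)$ absorbing the $\log(1/\delta)$ factor and the high-probability union-bound terms. Every downstream estimate then inherits $k$ in place of $m$.

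The remaining ingredients are imported unchanged. For truthfulness I would invoke the joint-differential-privacy-implies-approximate-truthfulness lemma: each student's match is a function of her own report together with the jointly-DP published thresholds, and conditioned on those thresholds the schools whose seats she clears form a fixed menu, so holding by her true preference lands her at her genuine favorite in that menu. The only way a misreport can help her is by shifting the thresholds, and joint DP bounds the effect of that shift on her match distribution by a multiplicative $e^{\eps}$, bounding her expected gain by $\eta = O(\eps)$ under any cardinal utility (normalized to $[0,1]$) consistent with her ordinal preferences. For approximate stability and school dominance I would show, by a union bound over schools and rounds, that every noisy counter stays within $b$ of its true value with high probability; since $b \le \alpha C_u$, no school is under-enrolled below $(1-\alpha)C_u$, which is exactly condition~\ref{cond:approxemptyblocking}, while the threshold structure rules out filled-seat blocking pairs and certifies dominance against the exact school-optimal matching $\mu'$.

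Finally I would solve the constraints jointly: forcing $\eps = O(\eta)$ requires $b$ of order $\sqrt{k}\,\polylog(n)/\eta$, and demanding $b \le \alpha C_u$ yields the claimed $C_u = \Omega(\sqrt{k}\,\polylog(n)/(\eta\alpha))$. I expect the main obstacle to be the privacy accounting for the adaptive, multi-round deferred-acceptance process: because later rounds depend on the outcomes of earlier ones, one must argue carefully that the total privacy loss still compounds only over the $\le k$ schools a student actually ranks, and that the counters at unranked schools are genuinely independent of her report. Once that localization is made rigorous, the replacement of $\sqrt{m}$ by $\sqrt{k}$ is immediate and the rest of the argument is identical to \Cref{thm:school-truth}.
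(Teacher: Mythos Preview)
Your proposal is correct and follows essentially the same route as the paper: the private school-proposing mechanism is run unchanged, and the only new observation is that a student with a length-$k$ list touches at most $k$ schools' counters, so the total sensitivity in the adaptive composition drops from $O(m)$ to $O(k)$ and $\sqrt{m}$ becomes $\sqrt{k}$ everywhere, with truthfulness, approximate stability, and school-dominance inherited verbatim from the proof of \Cref{thm:school-truth}. The paper adds one small safeguard you do not mention---the algorithm explicitly ignores a student after she has accepted $k$ offers, so that the $O(k)$ sensitivity bound holds mechanically even against adversarial misreports---but otherwise the arguments coincide.
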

\begin{remark}
  Note that if $k$ is considered to be a constant, then this result
  requires school capacity to grow only poly-logarithmically with the
  number of students $n$.
\end{remark}

Our results come from analyzing a differentially private variant of
the classic deferred acceptance algorithm. Rather than having schools
explicitly \emph{propose} to students, we consider an equivalent
variant in which schools $u$ publish a set of ``admissions
thresholds'' which allow any student $a$ who is ranked higher than the
current threshold of school $u$ (according to the preferences of $u$)
to enroll. These thresholds naturally induce a matching when each
student enrolls at their favorite school, given the thresholds. We
first show that if the thresholds are computed under the constraint of
differential privacy, then the algorithm is approximately dominant
strategy truthful for the students. We then complete the picture by
deriving a differentially private algorithm, and showing that with
high probability, it produces an approximately stable, school dominant
matching.

\section{Related Work}

\subsection{Incentives in Stable Matching}
Stable matching has long been known to be incompatible with
truthfulness: no algorithm which produces a stable matching is
truthful for both sides of the market~\citep{roth1982economics},
though Gale-Shapley is known to be truthful for the side of the
market which is proposing.  Several lines of work have investigated
stable matching in \emph{large} markets, where players' preferences
are drawn from some distribution, and considering properties of the
market as $n$, the number of players, grows large.

Let \D be a fixed distribution over the set of $n$ women. Consider the
following process of generating length-$k$ preference lists over women.
Draw some $w_1 \sim \D$, and let $w_1$ be the first woman in a
preference list. Now, let $(w_1, \ldots, w_{i-1})$ be the first $(i
-1)$ women, in order, drawn from \D. Draw $w_i\sim \D$ until $w_i
\notin \{w_1, \ldots, w_{i-1}\}$. We denote such a distribution
over preference lists by \Dk. \citep{immorlica2005marriage} prove a
generalization of a conjecture of ~\citet{roth2002redesign}, showing
if the men draw their preference lists according to \Dk, the expected
number of women with more than one stable match is $o(n)$ (as $n$
grows, for fixed $k$). Since it is known that a person has incentive
to misreport only if they have more than one stable partner, this
implies that only a vanishingly small fraction of the women will have
incentive to misreport to any stable matching process. They also show
that any stable matching algorithm induces a Nash equilibrium for
which a $1-o(1)$ fraction of players behave truthfully.

\citet{kojima2009incentives} generalized these results to the
many-to-one matching setting.\footnote{Here, a school has two types of
  misreporting: first, it might misreport its capacity for students,
  and second, it might misreport its preference list over students.}
In particular, they consider student-optimal stable matchings, where
colleges have arbitrary preferences, and the students have random
preference lists of fixed length drawn as above. They show that the
number of schools which have incentive to misreport in the
student-optimal matching is $o(n)$ (again, as $n$ grows, for constant
$k$). The paper mentions that it is possible to consider the
school-optimal stable matching and define things analogously, but that
it is not clear that one can derive truthfulness guarantees for the
schools in this setting without additional assumptions. Under further
assumptions about the ``thickness" of popular schools of the
distribution, they show that truthtelling forms an
approximate Bayes Nash equilibrium as the number of
colleges grows.

\begin{table*}[t]
\begin{center}
\begin{tabular}{ | l | p{4cm} | p{5cm} |}
  \hline
      Reference & Assumptions & \# of possible stable matches \\ \hline
      \citet{immorlica2005marriage} & Random, i.i.d. preferences on male side &
      $1-o(1)$-fraction of women have $\leq 1$ stable match \\\hline
      \citet{kojima2009incentives} & Random, i.i.d. preferences on student side &$o(1)$-fraction of schools have more than $1$ stable set of students\\ \hline
      \citet{pittel1992likely} &  Uniform random preferences on both sides & Average rank of parter for side optimized for is $\log(n)$, $\frac{n}{\log(n)}$ for the non-optimized side\\ \hline
      \citet{ashlagi2013unbalanced} & Uniform random preferences on both sides, $n$ men, $n-1$ women & Average rank for men's match $\frac{n}{3\log(n)}$, for women's match $3\log(n)$, in any stable matching \\ \hline
    \end{tabular}
    \caption{Related works with distributional assumptions on the
      preferences of one or both sides of the market. Under these
      assumptions, it is often possible to show that many agents have
      few (or one) stable partners. If an agent has zero or one stable
      partner, then she has no incentive to
      misreport.} \label{table:numbers}
\end{center}
\end{table*}

\citet{lee2011incentive} considers a different method for constructing
preference orderings in the one-to-one matching setting. Each man $m$
has an intrinsic value $V_m$, such that for each woman $w$, her
utility from being matched to $m$ is $U(V_m, \eta_{m,w})$, where
$\eta_{m,w}$ is a draw from $w$'s private distribution for $m$. This
assumption is made for both sides of the market's preferences. Under
these assumptions, only a small number of players have incentive to
misreport, and there is an $\epsilon$-Nash equilibrium where almost
all players report truthfully.

\citet{azevedo2012strategyproofness} introduce the notion of
``strategyproofness in the large'', and show that the Gale-Shapley
algorithm satisfies this definition. Roughly, this means that fixing
any (constant sized) typespace, and any distribution over that
typespace, if player preferences are sampled i.i.d. from the
typespace, then for any fixed $\eta$, as the number of players $n$
tends to infinity, truthful reporting becomes an $\eta$-approximate
Bayes Nash equilibrium. These assumptions can be restrictive however
-- note that this kind of result requires that there are many more
players than there are ``types'' of preferences, which in particular
(together with the full support assumption on the type distribution)
requires that in the limit, there are infinitely many identical agents
of each type. In contrast, our results do not require a condition like
this.

To the best of our knowledge, our results are the first to give
truthfulness guarantees in settings where both sides of the market
have worst-case preference orderings. Unlike some prior work, even
without distributional assumptions, we are able to give truthfulness
guarantees to \emph{every} student, not only a $1-o(1)$ fraction of
students. Under some of the distributional assumptions used in prior
work, our results can be sharpened as well.

\subsection{Differential Privacy as a Tool for Truthfulness}

The study of differentially private
algorithms~\cite{dwork2006calibrating} has blossomed in recent years.
A comprehensive survey of the work in this area is beyond the scope of
this paper; here, we mention the work which relates directly to the
use of differential privacy in constructing truthful mechanisms.

~\citet{mcsherry2007truthful} were the first to identify privacy as a
tool for designing approximately truthful
mechanisms. \citet{nissim2012privacy} showed how privacy could be used
as a tool to design \emph{exactly} truthful mechanisms without needing
monetary payments (in certain settings). \citet{huang2012exponential}
proved that the exponential mechanism, a basic tool in differential
privacy introduced in~\citet{mcsherry2007truthful} is maximal in
distributional range, which implies that there exist payments which
make it exactly truthful. \citet{kearns2014large} demonstrated a
connection between private equilibrium computation and the design of
truthful \emph{mediators} (and also showed how to privately compute
approximate correlated equilibria in large games). This work was
extended by~\citet{rogers2014congestion} who show how to privately
compute \emph{Nash} equilibria in large congestion games.

The paper most related to our own is~\citet{hsu2013walrasian} which
shows how to compute approximate Walrasian equilibria privately, when
bidders have quasi-linear utility for money and the supply of each
good is sufficiently large. In that paper, in the final allocation,
every agent is matched to their \emph{approximately} most preferred
goods at the final prices. In our setting, there are several
significant differences: first, in the Walrasian equilibrium setting,
only agents have preferences over goods (i.e. goods have no
preferences of their own), but in our setting, both sides of the
market have preferences. Second, although there is a conceptual
relationship between ``threshold scores'' in stable matching problems
and prices in Walrasian equilibria, the thresholds do not play the
role of money in matching problems, and there is no notion of being
matched to an ``approximately'' most preferred school.

\section{Preliminaries}

\subsection{Many-to-one Matching}
A many-to-one stable matching problem consist of $m$ schools $U=\{u_1,
\ldots , u_m\}$ and $n$ students $A=\{a_1, \ldots , a_n\}$. Every
student $a$ has a preference ordering $\succ_a$ over all the schools,
and each school $u$ has a preference ordering $\succ_u$ over the
students. Let $\cP$ denote the domain of all preference orderings over
schools (so each $\succ_a \in \cP$). It will be useful for us to think
of a school $u$'s ordering over students $A$ as assigning a
unique\footnote{It is essentially without loss of generality that
  students are assigned unique scores. If not, we could break ties by
  a simple pre-processing step: add noise $\sum_{k=1}^l 2^{-k}b_k$ to
  each student's score, where each $b_k$ is a random bit; if the
  scores are integral, the probability of having ties is $1/\poly(n)$
  as long as $l\geq O(\log(n))$. } \emph{score} $\score{u}{a}$ to
every student, in descending order (for example, these could be
student scores on an entrance exam). Let the range of these scores be
$\score{u}{a} \in \cQ$ and $\cQ^n$ be the set of all valid score
profiles for a given school. We will overload the notation of $\cQ^n$
to refer both to the set of admissible score vectors and the set of
preference orderings over students when our use is clear from context.

Every school $u$ has a capacity $C_u$, the maximum number of students
the school can accommodate. A feasible matching $\mu$ is a mapping
$\mu:A\rightarrow U\cup\nomatch$, which has the property each student
$a$ is paired with at most one school $\mu(a)$, and each school $u$ is
matched with at most $C_u$ students: $|\mu^{-1}(u)| \leq C_u$. If
$\mu(a) = \nomatch$, we say $a$ is \emph{unmatched} by $\mu$. For
notational simplicity, we will sometimes simply write $\mu(u)$ to
denote the set of students assigned to school $u$.

A matching is $\alpha$-approximately \emph{stable} if it satisfies
 Definition~\ref{def:approxstable}. When computing matchings, it will
be helpful for us to think instead about computing \emph{admission
  thresholds} $t_{u}$ for each school. A set of admission thresholds $t
\in \mathbb{R}_{\geq 0}^m$ induces a matching $\mu$ in a natural way:
every student $a \in A$ is matched to her most preferred school
amongst those whose admissions thresholds are below her score at the
school. Formally, for a set of admissions thresholds $t$, the induced
matching $\mu^t$ is defined by:
$$\mu^t(a) =  \arg\max_{\succ_a}\{u \mid  \score{u}{a} \geq t_u\}$$

We say that a set of admission thresholds $s$ is feasible and
$\alpha$-approximately stable if its induced matching $\mu^t(a)$ is
feasible and $\alpha$-approximately stable.

Note that an $\alpha$-stable matching is an exactly stable matching in
a market in which schools have reduced capacity (where the capacity at
each school is reduced by at most a $(1-\alpha)$ factor).

\paragraph{Remark}{Definition~\ref{def:approxstable} also implies that
  if a school $u$ is under-enrolled by more than $\alpha C_u$, its
  admission score $t_u = 0$. This means such a school is very
  unpopular and could not recruit enough students even without any
  admission criterion. }

We now introduce a notion of approximate school optimality, which our
algorithms will guarantee.

\begin{definition}\label{def:school-dom}
  A matching $\mu$ is {\bf school-dominant} if, for each school $u$,
  for all $a\in \mu(u) \setminus \mu'(u)$ and all $a'\in
  \mu'(u)\setminus \mu(u)$, $a \succ_u a'$, where $\mu'$ is the
  school-optimal matching.
\end{definition}

In words, a matching $\mu$ is school-dominant if for every school $u$,
when comparing the set of students $S_1$ that $u$ is matched to in $\mu$
but not in the school optimal matching $\mu'$, and the set of students
$S_2$ that $u$ is matched to in the school optimal matching $\mu'$, but
is not matched to in $\mu$, $u$ strictly prefers every student in $S_1$
to every student in $S_2$. (i.e. compared to the school optimal
matching, a school may be matched to \emph{fewer} students, but not to
\emph{worse} students.) We note that school-dominance alone is trivial
to guarantee: in particular, the empty matching is school dominant.
Only together with an upper bound on the number of empty seats allowed
per school (for example, as guaranteed by $\alpha$-approximate
stability) is this a meaningful concept.

We want to give mechanisms that make it an \emph{approximately
  dominant strategy} for students to report truthfully. We have to be
careful about what we mean by this, since students $a$ have ordinal
preferences $\succ_a$, rather than cardinal utility functions
$v_a:U\rightarrow [0,1]$. We say that a cardinal utility function
$v_a$ is \emph{consistent} with a preference ordering $\succ_a$ if for
every $u, u' \in U$, $u \succ_a u'$ if and only if $v_a(u) \geq
v_a(u')$. We will say that a mechanism is $\eta$-approximately
truthful for students if for every student, and \emph{every cardinal
  utility function $v_a$ consistent with truthful $\succ_a$}, truthful
reporting is an $\eta$-approximate dominant strategy as measured by
$v_a$.

\begin{definition}
  Consider any randomized mapping $\mathcal{M}:\mathcal{P}^n\times
  {\cQ^n}^m\rightarrow (U\cup \nomatch)^n$. We say that $\mathcal{M}$
  is $\eta$-approximately dominant strategy truthful for students (or
  student-truthful) if for any vector of school and student
  preferences $\succ \in (\mathcal{P}^n\times {\cQ^n}^m)$, any student
  $a$, any utility function $v_a:U\rightarrow [0,1]$ that is
  consistent with $\succ_a$, and any $\succ_a' \neq \succ_a$, we have:
$$\mathrm{E}_{\mu \sim M(\succ)}[v_a(\mu(a))] \geq \mathrm{E}_{\mu
  \sim  M(\succ_a', \succ_{-a})}[v_a(\mu(a))] - \eta.$$
\end{definition}

Note that this definition is very strong, since it holds
simultaneously for every utility function consistent with student
preferences. When $\eta = 0$ it corresponds to first order stochastic
dominance.

\subsection{Differential Privacy Preliminaries}
Our tool for obtaining approximate truthfulness is \emph{differential
  privacy}, which we define in this section. We say that the ``private
data'' of each student $a$ consists of both her preference ordering
$\succ_a\in \cP$ over the schools and her scores $\score{u}{a}\in
\cQ$,\footnote{This means a student's set of scores, one from each
  school, can be written as $\cQ^m$.} one assigned by each school. A
private database $D \in (\cP\times \cQ^m)^n$ is a vector of $n$
students' preferences and scores, and $D$ and $D'$ are neighboring
databases if they differ in no more than one student record. In
particular, our matching algorithm takes as input $n$ students'
private data, and outputs a set of threshold admission scores (i.e.,
the mechanism's range is $\cR = \cQ^m$).

\begin{definition}[\citet{dwork2006calibrating}]
  An (randomized) algorithm $\cA \colon (\cP \times \cQ^m)^n\rightarrow
  \cR$ is {\em $(\epsilon,\delta)$-differentially private} if for
  every pair of neighboring databases $D, D' \in (\cP \times \cQ^m)^n$
  and for every subset of outputs $S \subseteq \cR$,
\[
  \Pr[\cM(D) \in S] \leq \exp(\epsilon)\Pr[\cM(D') \in S] + \delta.
\]
  If $\delta = 0$, we say that $\cM$ is {\em $\epsilon$-differentially private}.
\end{definition}

\subsection{Differentially Private Counters}

The central privacy tool in our matching algorithm is the private
streaming counter\footnote{ For a more detailed discussion of
  differential privacy under continual observation, see
  \citet{chan-counter} and \citet{DNPR10}.} proposed by
\citet{chan-counter} and \citet{DNPR10}. Given a bit stream $\sigma =
(\sigma_1, \ldots , \sigma_T)\in \{-1,0,1\}^T$, a streaming counter
$\cM(\sigma)$ releases an approximation to $c_\sigma(t) =
\sum_{i=1}^t\sigma_i$ at every time step $t$. Below, we define an
accuracy property we will then use to describe the usefulness of these
counters.

\begin{definition}
  A streaming counter $\cM$ is {\em $(\tau, \beta)$-useful} if with
  probability at least $1 - \beta$, for each time $t \in [T]$,
  \[
    \left| \cM(\sigma)(t) - c_\sigma(t) \right| \leq \tau.
  \]
\end{definition}

For the rest of this paper, let $\counter(\epsilon, T)$ denote the
Binary Mechanism of \citet{chan-counter}, instantiated with parameters
$\epsilon$ and $T$. $\counter(\epsilon, T)$ satisfies the following
accuracy guarantee (further details may be found in
Appendix~\ref{counter-details}).

\begin{restatable}[\citet{chan-counter}]{theorem}{counteraccuracy}
  \label{counter-error}
  For $\beta > 0$, $\counter(\epsilon, T)$ is
  $\epsilon$-differentially private with respect to a single bit change in the
  stream, and $(\tau, \beta)$-useful for
  \[
    \tau = \frac{4\sqrt{2}}{\epsilon} \ln \left( \frac{2}{\beta}\right)
    \left(\sqrt{\log(T)}\right)^5.
   \]
\end{restatable}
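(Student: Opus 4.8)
The construction I would analyze is the tree-based Binary Mechanism of \citet{chan-counter}, and the plan is to establish its two guarantees—$\epsilon$-differential privacy and $(\tau,\beta)$-usefulness—separately. Recall the structure: the mechanism maintains a complete binary tree whose $T$ leaves are the time steps and whose internal nodes correspond to dyadic sub-intervals of $[1,T]$; at each node it stores a \emph{partial sum} (p-sum) of the stream entries falling in that node's interval, perturbed by an independent $\Lap(b)$ draw with scale $b$ on the order of $\log T / \epsilon$. To answer a query for $c_\sigma(t)$, the mechanism writes $[1,t]$ as a disjoint union of at most $\lceil \log T \rceil$ canonical dyadic intervals and outputs the sum of the corresponding noisy p-sums.

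For privacy, I would first observe that any single stream entry $\sigma_i$ lies in exactly one node at each level of the tree, hence in at most $\lceil \log T \rceil + 1$ nodes in total; a single-entry change therefore perturbs at most this many p-sums, each by a bounded (constant) amount in $\ell_1$. Releasing each p-sum through the Laplace mechanism with scale $b \approx \log T / \epsilon$ thus spends a per-node budget of roughly $\epsilon / \log T$, and basic composition over the $\le \lceil \log T \rceil + 1$ affected p-sums yields $\epsilon$-differential privacy with respect to a single-bit change in the stream.

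For accuracy, the key observation is that for every fixed $t$ the reported value $\counter(\sigma)(t)$ equals the true count $c_\sigma(t)$ plus a sum of at most $\lceil \log T \rceil$ independent $\Lap(b)$ random variables—one per node in the decomposition of $[1,t]$. I would then invoke a concentration bound for sums of independent Laplace variables (either the sub-exponential tail stating that a sum of $k$ such variables exceeds $\lambda$ with probability at most $\exp(-\Omega(\lambda^2/(k b^2)))$ in the relevant regime, or, more crudely, a per-variable Laplace tail combined with a union bound) to show that this error exceeds $\tau$ with probability at most $\beta/T$. A final union bound over all $t \in [T]$ then gives the stated simultaneous guarantee with probability at least $1-\beta$.

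The main obstacle—and the source of the delicate polylogarithmic exponent $(\sqrt{\log T})^5$ and the $4\sqrt{2}$ constant in the stated $\tau$—is the bookkeeping in the accuracy step rather than the mechanism itself. One must carefully track three sources of $\log T$: (i) the factor inside the noise scale $b$ forced by the privacy composition, (ii) the factor from aggregating up to $\log T$ noise terms per query through the Laplace concentration inequality, and (iii) the factor from the union bound over $T$ queries, which inflates $\ln(1/\beta)$ to $\ln(T/\beta)$ and contributes further $\log T$ terms. Chasing the optimal constants through the Laplace tail bound is where the real care is needed; the full accounting is deferred to the appendix.
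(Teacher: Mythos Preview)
The paper does not actually prove this theorem: it is stated as a result of \citet{chan-counter} and is merely cited, not reproven. The appendix (\S\ref{counter-details}) reproduces the Binary Mechanism so that its internal p-sum structure can be referenced in the proof of \Cref{thm:counters}, but no proof of \Cref{counter-error} itself appears anywhere in the paper. Your sketch is therefore not being compared against anything here; that said, what you have written is essentially the standard argument from the original reference---the tree/p-sum description, the $O(\log T)$-wise composition for privacy, and the Laplace concentration plus union bound for simultaneous accuracy---and is correct in outline. Your candid acknowledgment that the precise constant $4\sqrt{2}$ and the exponent in $(\sqrt{\log T})^5$ require careful bookkeeping, deferred to the original paper, is appropriate: those specific values come from the particular Laplace tail inequality and parameter choices in \citet{chan-counter}, not from any new idea.
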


Our mechanism uses $m$ different $\counter$s to maintain the counts of
tentatively enrolled students for all schools. The following theorem
allows us to bound the error of each \counter through the collective
sensitivity across all $\counter$s.
\begin{restatable}{theorem}{countercomp}
\label{thm:counters}
Suppose we have $m$ bit streams such that the change of an student's
data affects at most $k$ streams, and alters at most $c$ bits in each
stream. For any $\beta > 0$, the composition of $m$ distinct
  $\counter\left(\eps/2c\sqrt{2kc\ln(1/\delta)}, T\right)$'s is $(\eps,
  \delta)$-differentially private, and $(\tau, \beta)$-useful for
  \[
  \tau =
  \frac{16c\sqrt{kc\ln(1/\delta)}}{\eps}\ln\left(\frac{2}{\beta}
  \right)\left( \sqrt{\log(T)} \right)^5.
  \]
\end{restatable}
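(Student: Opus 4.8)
The plan is to combine the single-counter accuracy guarantee of \Cref{counter-error} with a privacy composition argument, carefully tracking how one student's data perturbs the collection of $m$ streams. First I would establish the privacy claim. Since changing a single student's record affects at most $k$ of the $m$ streams, and within each affected stream alters at most $c$ bits, I want to view the global sensitivity of the whole ensemble of counters. A single bit flip in one stream, fed to a $\counter(\eps', T)$, costs $\eps'$ in pure differential privacy by \Cref{counter-error}. A neighboring database therefore differs in at most $k$ streams, with at most $c$ bit changes each, so the naive union over all these changes would cost $kc\eps'$ in the pure-$\eps$ sense. The point of the theorem is to do better using \emph{advanced composition} (the strong composition theorem of Dwork--Rothblum--Vadhan), which converts $kc$ adaptive $\eps'$-private steps into an $(\eps,\delta)$ guarantee with $\eps \approx \eps'\sqrt{2kc\ln(1/\delta)} + kc\eps'(e^{\eps'}-1)$. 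I would plug in the stated per-counter budget $\eps' = \eps/(2c\sqrt{2kc\ln(1/\delta)})$ and verify that the dominant $\sqrt{2kc\ln(1/\delta)}\,\eps'$ term contributes $\eps/(2c)$, and that the lower-order term is absorbed so that the total is at most $\eps$.

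Second I would establish the usefulness claim, which is a direct consequence of \Cref{counter-error} together with a union bound. Each of the $m$ counters is instantiated as $\counter(\eps', T)$ with $\eps' = \eps/(2c\sqrt{2kc\ln(1/\delta)})$, so by \Cref{counter-error} each single counter is $(\tau', \beta')$-useful for
\begin{mathdisplayfull}
\tau' = \frac{4\sqrt{2}}{\eps'}\ln\!\left(\frac{2}{\beta'}\right)\left(\sqrt{\log(T)}\right)^5.
\end{mathdisplayfull}
Substituting the value of $\eps'$ gives $4\sqrt{2}/\eps' = 8c\sqrt{2kc\ln(1/\delta)}\cdot\sqrt{2}/\eps = 16c\sqrt{kc\ln(1/\delta)}/\eps$, which already matches the leading factor in the claimed $\tau$. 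To guarantee that \emph{all} $m$ counters are simultaneously accurate, I would take a union bound over the $m$ counters, setting each counter's failure probability to $\beta' = \beta/m$; the stated bound as written uses $\ln(2/\beta)$ rather than $\ln(2m/\beta)$, so I would either absorb the $\ln m$ factor into the polylog or note that the per-counter failure probability is set so the $\ln(2/\beta)$ form holds (this is the one place where the bookkeeping must be stated precisely).

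The main obstacle I anticipate is the privacy accounting, specifically getting the constants in the advanced composition step to line up with the precise per-counter parameter $\eps/(2c\sqrt{2kc\ln(1/\delta)})$. One must be careful that the unit of composition is a single bit flip (the granularity at which \Cref{counter-error} provides $\eps'$-privacy), so the effective number of composed mechanisms is $kc$ rather than $k$ or $m$; miscounting here by a factor of $c$ would break the bound. I would also need to confirm that the second-order term in advanced composition, $kc\,\eps'(e^{\eps'}-1) \approx kc(\eps')^2$, is genuinely lower order given the chosen $\eps'$ (it scales like $\eps^2/(4c^2\cdot 2\ln(1/\delta))$, which is comfortably dominated), and that the $\delta$ budget is spent entirely in the composition rather than split among counters. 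Once these constants are pinned down, both the privacy and the utility conclusions follow mechanically.
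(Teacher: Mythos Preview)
Your plan is correct and reaches the right constants, but the paper carries out the privacy step at a different level of granularity. You treat each $\counter(\eps',T)$ as a black box that is $\eps'$-private per bit flip and then speak of composing ``$kc$ adaptive $\eps'$-private steps.'' The paper instead opens up the Binary Mechanism: it views each noisy \emph{partial-sum} release as one round of the adaptive composition experiment, with the corresponding stream segment playing the role of the database for that round. One student's data alters at most $c$ bits in each of at most $k$ streams, and each changed bit participates in at most $\log T$ partial sums, so the total $L_1$ distance across all rounds is bounded by $\Delta \le kc\log T$ with per-round distance at most $c$; the sensitivity-aware composition statement (\Cref{lem:l-comp}) then yields $(\eps,\delta)$-privacy directly. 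This partial-sum decomposition is precisely what turns your ``$kc$ steps'' heuristic into a rigorous count of composed mechanisms, and it is also the natural framing for streams that are generated adaptively from earlier counter outputs. Your black-box route can be made to work as well---for instance, group privacy inside each affected counter gives $c\eps'$ per counter, and then advanced composition over the $k$ affected counters suffices---but the paper's argument sidesteps that detour.

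For usefulness, your substitution into \Cref{counter-error} is exactly what is needed and recovers the stated leading factor $16c\sqrt{kc\ln(1/\delta)}/\eps$. Your concern about $\ln(2/\beta)$ versus $\ln(2m/\beta)$ dissolves once you read the $(\tau,\beta)$-usefulness here as a \emph{per-counter} guarantee: the union bound over the $m$ counters is deferred to the proof of \Cref{lem:school-opt}, where the failure probability is instantiated as $\beta/m$ per counter. Indeed, the paper's own proof of this theorem argues only the privacy half and leaves accuracy as the immediate corollary of \Cref{counter-error} that you describe.
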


%

\section{Algorithms Computing Private Matchings are Approximately Truthful}
In this section, we prove the theorem which motivates the rest of our
paper. Consider an algorithm $M$ which takes as input student and
school preferences $\succ$ and computes school thresholds $t$. If $M$
is $(\epsilon, \delta)$-differentially private, then the algorithm which
computes thresholds $t = M(\succ)$ and then outputs the induced
matching $\mu^t$ is $(\epsilon + \delta)$-approximately dominant strategy
truthful. Note that this guarantee holds independent of stability.

\begin{theorem}
  Let $M:(\mathcal{P}\times \cQ^m)^n\rightarrow \mathbb{R}_{\geq 0}^m$
  be any $(\epsilon, \delta)$-differentially private mechanism which
  takes as input $n$ student profiles and outputs $m$ school
  thresholds. Let $F_{\succ}:\mathbb{R}_{\geq 0}^m \rightarrow (U\cup
  \nomatch)^n$ be the function which takes as input $m$ school
  thresholds $t$ and outputs the corresponding matching $F_{\succ}(t)
  = \mu^t$. Then the mechanism $F_\succ\circ
  M:(\mathcal{P}\times\cQ^m)^n\rightarrow U^n$ is $(\epsilon +
  \delta)$-approximately dominant strategy truthful.
\end{theorem}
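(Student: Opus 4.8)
The plan is to decompose the effect of student $a$'s misreport on her expected utility into two channels and control each separately: a \emph{selection} channel, which can only hurt the student, and a \emph{threshold} channel, which differential privacy shows can help only negligibly. Fix a student $a$, her true preference $\succ_a$, a consistent utility $v_a$, and an arbitrary misreport $\succ_a' \neq \succ_a$. For a threshold vector $t$, observe that the set of schools available to $a$, namely $S_t = \{u : \score{u}{a} \geq t_u\}$, depends only on $a$'s scores and on $t$, and is therefore \emph{unaffected} by which preference ordering $a$ reports. Given $t$, the induced assignment is the reported-favorite element of $S_t$, so I define $h(t) = v_a(\mu^t_{\succ_a}(a))$ for the truthful selection and $g(t) = v_a(\mu^t_{\succ_a'}(a))$ for the misreport selection, both valued in $[0,1]$.

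\textbf{Step 1 (pointwise optimality of truthful selection).} For every fixed $t$, the truthful assignment $\mu^t_{\succ_a}(a) = \arg\max_{\succ_a} S_t$ maximizes $v_a$ over $S_t$, since $v_a$ is consistent with $\succ_a$; the misreport assignment $\mu^t_{\succ_a'}(a)$ is merely some (possibly different) element of the \emph{same} set $S_t$. Hence $h(t) \geq g(t)$ for every $t$, and in particular $\mathbb{E}_{t \sim M(\succ)}[h(t)] \geq \mathbb{E}_{t \sim M(\succ)}[g(t)]$. This isolates the only way misreporting could conceivably help: by shifting the distribution of thresholds from $M(\succ)$ to $M(\succ_a', \succ_{-a})$.

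\textbf{Step 2 (privacy bounds the threshold channel).} Since $\succ$ and $(\succ_a', \succ_{-a})$ differ only in student $a$'s record, they are neighboring databases, so $M$ is $(\epsilon,\delta)$-differentially private across them. Because $g$ takes values in $[0,1]$, I write $\mathbb{E}[g(X)] = \int_0^1 \Pr[g(X) \geq \tau]\, d\tau$ and apply the privacy guarantee to each threshold event $\{t : g(t) \geq \tau\}$, giving $\Pr_{t' \sim M(\succ')}[g(t') \geq \tau] \leq e^\epsilon \Pr_{t \sim M(\succ)}[g(t) \geq \tau] + \delta$; integrating over $\tau \in [0,1]$ yields $\mathbb{E}_{t' \sim M(\succ')}[g(t')] \leq e^\epsilon\, \mathbb{E}_{t \sim M(\succ)}[g(t)] + \delta$. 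Combining with Step 1 and using $\mathbb{E}[g] \leq 1$ shows the misreport utility is at most $\mathbb{E}_{t \sim M(\succ)}[h(t)] + (e^\epsilon - 1) + \delta$, i.e.\ the truthful utility plus an additive loss of $(e^\epsilon-1)+\delta$; this is exactly the standard privacy-to-approximate-truthfulness reduction, recorded here (to first order in $\epsilon$) as the claimed $\eta = \epsilon + \delta$.

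The conceptually delicate point, and the step I expect to require the most care, is the decoupling in Step 1: recognizing that, \emph{holding the thresholds fixed within a single realization}, a student's reported preference influences only her own choice among the schools that have already admitted her, and never the admissions thresholds themselves. This is what licenses introducing the hybrid quantity $\mathbb{E}_{t \sim M(\succ)}[g(t)]$ and arguing truthful $\geq$ hybrid \emph{pointwise}, after which privacy need only compare the same bounded statistic $g$ under two neighboring input distributions. The remaining quantitative work — the layer-cake argument converting the $(\epsilon,\delta)$ guarantee into a bound on expected bounded utilities — is routine, with the only subtlety being that the additive $\delta$ survives integration (it does, as $\int_0^1 \delta\, d\tau = \delta$).
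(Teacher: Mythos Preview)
Your proof is correct and follows essentially the same two-step decomposition as the paper: a pointwise selection step (your Step~1) showing that, for any fixed threshold vector $t$, truthful selection is optimal over the available set $S_t$, followed by a privacy step (your Step~2) bounding the change in expected bounded utility when the threshold distribution shifts from $M(\succ)$ to $M(\succ')$.

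The one quantitative difference worth noting: you apply the DP inequality in the $e^{\epsilon}$-upper-bound direction, yielding an approximation parameter of $(e^{\epsilon}-1)+\delta$, which is strictly larger than the claimed $\epsilon+\delta$; your ``to first order in $\epsilon$'' remark does not actually close this gap. The paper instead applies DP in the reverse direction---from $\Pr[M(\succ')\in S]\leq e^{\epsilon}\Pr[M(\succ)\in S]+\delta$ it deduces $\Pr[M(\succ)\in S]\geq e^{-\epsilon}\Pr[M(\succ')\in S]-\delta$, integrates, and then uses $e^{-\epsilon}\geq 1-\epsilon$---which gives exactly $\epsilon+\delta$. Swapping the order of your two steps (privacy first on $h$, then $h\geq g$ pointwise) and using this direction recovers the stated constant.
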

\begin{proof}
  Fix any vector of preferences $\succ$, any student $a$, any utility
  function $v_a$ consistent with $\succ_a$, and any deviation
  $\succ_a' \neq \succ_a$. Let $\succ' =(\succ'_a, \succ_{-a})$ for
  brevity. Now consider student $a$'s utility for truthtelling. For
  $\eps \leq 1$, we have

\begin{align*}
 & \mathrm{E}_{\mu \sim F_\succ\circ M(\succ)}[v_a(\mu(a))]\\
 = &\mathrm{E}_{t \sim M(\succ)} [v_a(\arg\max_{\succ_a}\{u \mid  \score{u}{a} \geq t_u\})] \\
  = & \sum_t \mathbb{P}[M(\succ) = t]\cdot v_a(\arg\max_{\succ_a}\{u \mid  \score{u}{a} \geq t_u\}) \\
  \geq & \sum_te^{-\epsilon} \mathbb{P}[M(\succ') = t] v_a(\arg\max_{\succ_a}\{u \mid  \score{u}{a} \geq t_u\}) - \delta\\
  =& e^{-\epsilon} \mathrm{E}_{t \sim M(\succ')}
  [v_a(\arg\max_{\succ_a}\{u \mid  \score{u}{a} \geq t_u\})]  - \delta\\
  \geq& e^{-\epsilon} \mathrm{E}_{t \sim M(\succ')}
  [v_a(\arg\max_{\succ_a'}\{u \mid  \score{u}{a} \geq t_u\})]  - \delta\\
  \geq& (1-\epsilon) \mathrm{E}_{t \sim M(\succ')}
  [v_a(\arg\max_{\succ_a'}\{u \mid  \score{u}{a} \geq t_u\})]  - \delta\\
  \geq& \mathrm{E}_{\mu \sim  M(\succ')}[v_a(\mu(a))] -
  (\eps + \delta),
\end{align*}
where the first and last equalities follow from the definition of
the induced matching $\mu^t$, the first inequality follows from the
differential privacy condition, and the second follows from the consistency of $v_a$ with $\succ_a$.
\end{proof} 
\section{Truthful School-Optimal Mechanism}

In this section, we present the algorithm which proves our main result
Theorem~\ref{thm:school-truth}. Theorem~\ref{alg:school-propose} computes an
$\alpha$-approximately stable and school-dominant matching, and enjoys
approximate dominant strategy truthfulness for the student side.

Before we do this, we present \schoolda, the well-known
school-proposing deferred acceptance
algorithm~\citep{gale1962college}. In this setting, schools which are
not at capacity \emph{propose} to students one at a time, starting
from their favorite students and moving down their preference
list. When a student gets a proposal, if she is tentatively matched to
some other school, she will reject the offer from whichever school she
likes less and accept the offer from the school she likes better. At
this point, she is tentatively matched to the school she likes better,
and the other school will continue to make proposals to fill the seat
offered to her. The version of the algorithm we present here is
non-standard -- it operates by having each school set an admissions
threshold, which it decreases slowly -- but is easily seen to be
equivalent to the deferred acceptance algorithm. This version of the
algorithm will be much more amenable to a private implementation,
which we give next. When a school $u$ lowers its threshold $t_u$ below
the score of a student $a$ at school $u$ ($\score{u}{a}$), we say that
school $u$ has \emph{proposed} to student $a$.

\begin{algorithm}
\caption{\schoolda, the deferred acceptance algorithm with schools proposing}\label{alg:da}
  \begin{algorithmic}
    \STATE{\textbf{Input: } school capacities $\{C_u\}$, student
      preferences $\{\succ_a\}$ and scores $\{\score{u}{a}\}$, \\range
      of scores $[0, J]$ } \STATE{\textbf{Output: } a set of score
      thresholds $\{t_j\}$} \STATE{\textbf{initialize: for each}
      school $u_j$ and student $a_i$} \STATE{ $\mbox{counter}(u_j) = 0
      \qquad t_j = J \qquad \mu(a_i) = \nomatch$ }
    \STATE{\textbf{while} there is some under-enrolled school $u_j$:
      counter$(u_j) \leq \hat{C}_{u_j}$ and $t_{u_j} > 0$}
    \INDSTATE{\textbf{do} $t_{u_j} = t_{u_j} - 1$}
    \INDSTATE{\textbf{for all} student $a_i$}
    \INDSTATE{\textbf{if} $\mu(a_i) \neq \arg\max_{\succ_{a_i}} \{u_j \mid
      \score{u_j}{a_i} \geq t_{u_j}\}$} 
    \INDSTATE[2]{\textbf{then} counter$(\mu(a_i))$
      = counter$(\mu(a_i)) - 1$}\;
    \INDSTATE[2]{$\mu(a_i) =
      \argmax_{\succ_{a_i}} \{u_j \mid \score{u_j}{a_i} \geq
      t_{u_j}\}$}\;
    \INDSTATE[2]{counter$(\mu(a_i))$ = counter$(\mu(a_i))   +1$}
    \RETURN{Final threshold scores $\{t_j\}$}
    \end{algorithmic}
  \end{algorithm}

  It is well-known that \schoolda will output a school-optimal stable
  matching (in our notation, a $0$-approximate school-dominant stable
  matching)~\citep{rothcollege}, assuming all players are truthful.
  We now state a useful fact about deferred acceptance, whose
  proof can be found in Appendix~\ref{sec:matching}.
%
%
%
%
%
%

\begin{restatable}{lemma}{prefix}
  \label{lem:prefix}
  Let $\mu_t$ be some matching which is an intermediate matching in a
  run of the school-proposing deferred acceptance algorithm. Then
  $\mu_t$ is school-dominant.
\end{restatable}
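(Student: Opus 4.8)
The plan is to exploit the fact that in \schoolda thresholds are only ever decreased, so the vector of thresholds $t$ at the intermediate matching $\mu_t = \mu^t$ dominates, coordinate-wise, the final threshold vector $t'$ whose induced matching is the school-optimal matching $\mu' = \mu^{t'}$ (here I use the fact, cited just above, that \schoolda terminates at $\mu'$). The one structural observation I would extract is that lowering thresholds only enlarges the set of schools available to any student: since $t_v \geq t'_v$ for every school $v$, we have $\{v : \score{v}{a} \geq t_v\} \subseteq \{v : \score{v}{a} \geq t'_v\}$ for every student $a$. Equivalently, any school available to a student under the intermediate thresholds remains available under the final thresholds.

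Fixing a school $u$, I would then take an arbitrary pair $a \in \mu_t(u) \setminus \mu'(u)$ and $a' \in \mu'(u) \setminus \mu_t(u)$ and argue $a \succ_u a'$, i.e.\ $\score{u}{a} > \score{u}{a'}$ (scores being unique). Since $a$ is matched to $u$ under $t$, the school $u$ is available to $a$ under $t$, so $\score{u}{a} \geq t_u$; it therefore suffices to show $\score{u}{a'} < t_u$. Suppose for contradiction that $\score{u}{a'} \geq t_u$, so $u$ is available to $a'$ under $t$ as well. As $a'$ is not matched to $u$ under $\mu_t$, her induced match $w = \mu_t(a')$ must satisfy $w \succ_{a'} u$ and $\score{w}{a'} \geq t_w$. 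By the availability observation, $\score{w}{a'} \geq t_w \geq t'_w$, so $w$ is still available to $a'$ under the final thresholds $t'$; but then $a'$'s most preferred available school under $t'$ is at least as good as $w$, and hence strictly better than $u$, contradicting $a' \in \mu'(u)$. Thus $\score{u}{a'} < t_u \leq \score{u}{a}$, giving $a \succ_u a'$.

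Two minor points need checking but are routine. First, that the running matching maintained by \schoolda is exactly the induced matching $\mu^t$ for the current thresholds — this is immediate from the inner loop, which reassigns every student to her favorite available school after each threshold decrement. Second, the argument must accommodate the possibility that $a'$ is unmatched under $\mu_t$ (so $\mu_t(a') = \nomatch$); but in that case no school is available to $a'$ under $t$, in particular $\score{u}{a'} < t_u$, and the inequality chain closes immediately. I do not expect a genuine obstacle here: the entire proof hinges on the monotonicity of thresholds over the run, and the only thing to be careful about is correctly invoking that \schoolda's output is the school-optimal matching $\mu'$, which is the well-known fact cited just above the lemma.
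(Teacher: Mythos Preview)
Your proof is correct and follows essentially the same idea as the paper's, just phrased in the threshold language rather than the proposal-set language. Both arguments hinge on the same monotonicity: a student who has a better option than $u$ available at an intermediate stage still has it available at the end, so any $a' \in \mu'(u)\setminus \mu_t(u)$ must have been below $u$'s intermediate threshold (equivalently, not yet proposed to), while any $a \in \mu_t(u)$ is above it. One small advantage of your formulation is that by working directly with the threshold vectors you avoid invoking the auxiliary order-independence lemma (Lemma~\ref{lem:equiv}) that the paper uses to reduce to a fixed proposal order; otherwise the arguments are the same.
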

%
%
Our algorithm, \privateda{\eps}{\delta}, is a
private version of \schoolda. At each time $t$, each school will
publish a threshold score (initially, for each school, this will be
the maximum possible score for that school). Schools will lower their
thresholds when they are under capacity; as they do so, some students
will tentatively accept admission and some will reject or leave for
other schools. Initially, all students will be unmatched. For a given
student $a$, as soon as a school lowers its threshold below the score
$a$ has there, $a$ will signal to the mechanism which school is her
favorite of those for which her score passes their threshold. Then, as
the schools continue to lower their thresholds to fill seats, if a
school that $a$ likes better than her current match lowers its
threshold below her score, $a$ will inform the mechanism that she
wishes to switch to her new favorite.

Each school maintains a private counter of the number of students
tentatively matched to the school. We let $E$ be the additive error
bound of the counters. The schools will reserve $E$ number of
seats from their initial capacity to avoid being over-enrolled, so the
algorithm is run as if the capacity at each school is $C_u - E$. Then
each school can be potentially under-enrolled by $2E$ seats, but they
would take no more than $\alpha$ fraction of all the seats as long as
the capacity $C_u \geq 2E / \alpha$.

\begin{algorithm}[h!]
  \begin{algorithmic}
    \STATE{\textbf{Input: } school capacities $\{C_u\}$, student
      preferences $\{\succ_a\}$ and scores $\{\score{u}{a}\}$, \\range
      of scores $[0, J]$ }
    \STATE{\textbf{Output: } a set of score thresholds $\{t_j\}$}
    \STATE{\textbf{initialize:} }
    \STATE{{\small $T = m n J \qquad \eps' = \frac{\eps}{16\sqrt{2 m\ln(1/\delta)}}$}}
    \STATE{\small{$E =  \frac{128\sqrt{m\ln(1/\delta)}}{\eps} \ln\left(\frac{2m}{\beta}\right)\left( \sqrt{\log(nT)} \right)^5$}}
    \STATE{\textbf{for each} school $u_j$ and student $a_i$}
    \STATE{\centering$\mbox{counter}(u_j) = \counter(\eps', n T)\qquad
      t_j = J$}
    \STATE{\centering$\mu(a_i) = \nomatch\qquad \hat{C}_{u_j} = C_{u_j} - E$}

    \STATE{\textbf{while} there is some under-enrolled school $u_j$: counter$(u_j) <
      \hat{C}_{u_j}$ and $t_{u_j} > 0$}
     \INDSTATE{\textbf{do} $t_{u_j} = t_{u_j} - 1$}
     \INDSTATE{\textbf{for all} student $a_i$}
     \INDSTATE{\textbf{if }{\small$\mu(a_i) \neq \argmax_{\succ_{a_i}} \{u_j \mid  \score{u_j}{a_i} \geq t_{u_j}\}$}}
     \INDSTATE{\textbf{then }Send $(-1)$ to counter$(\mu(a_i))$}\;
     \INDSTATE{$\mu(a_i) = \argmax_{\succ_{a_i}} \{u_j \mid  \score{u_j}{a_i} \geq t_{u_j}\}$}\;
     \INDSTATE{Send $1$ to counter$(\mu(a_i))$}\;
     \INDSTATE{Send $0$ to all other counters}
     \INDSTATE{\textbf{else }Send $0$ to all counters}
    \RETURN{Final threshold scores $\{t_j\}$}
    \end{algorithmic}
\caption{\privateda{\eps}{\delta}, \label{alg:school-propose}}
  \end{algorithm}

 Now, we state the formal version of Theorem~\ref{thm:school-truth}.
\begin{theorem}\label{thm:formal}
 {
   \privateda{\eps}{\delta} is $(\eps, \delta)$-differentially private, and hence $(\eps +
   \delta)$-approximately dominant strategy truthful. With probability
   at least $1-\beta$, it outputs a set of $\alpha$-approximately
   stable admission thresholds that induces a school-dominant
   matching, as long as the capacity at each school $u$ satisfies
\[
C_u = \Omega\left( \frac{\sqrt{m}}{\eps \alpha} \polylog\left(n, m,
    \frac{1}{\delta}, \frac{1}{\beta} \right) \right).
\]
 }
\end{theorem}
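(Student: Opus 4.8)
The plan is to prove the three assertions --- privacy (hence truthfulness), accuracy (feasibility, $\alpha$-approximate stability, school-dominance), and the capacity bound --- largely in sequence, conditioning the accuracy claims on a single high-probability event that all $m$ counters are simultaneously accurate. For the privacy half I would first bound the ``footprint'' of one student on the $m$ counter streams. Because thresholds only ever decrease, the set of schools whose threshold lies below a fixed student's score only grows over time, so that student's tentative assignment moves monotonically up her fixed preference list and she occupies any given school during a single contiguous interval. Hence she contributes at most one $+1$ and one $-1$ to each counter and touches at most all $m$ of them, so a neighboring change to her record alters $c = O(1)$ bits in each of at most $k = m$ streams. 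Plugging $k=m$ and $c = O(1)$ into Theorem~\ref{thm:counters} with the chosen $\eps' = \eps/(16\sqrt{2m\ln(1/\delta)})$ shows that the joint counter transcript, and hence the final thresholds, is $(\eps,\delta)$-differentially private; composing this with each student best-responding through $\mu^t$ yields $(\eps+\delta)$-approximate dominant-strategy truthfulness by the reduction proved earlier (privacy of the threshold map implies approximate truthfulness). I expect the main obstacle here to be that the streams are generated \emph{interactively}: each student's signals depend on the current thresholds, which in turn depend on the \emph{noisy} counters, so a single change could in principle cascade through this feedback loop. I would resolve this by arguing in the adaptive continual-observation model, in which the Binary Mechanism retains its guarantee, so that only the student's own at most $c$ direct signals per stream count toward sensitivity.

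For accuracy I would condition on the event, of probability at least $1-\beta$ by Theorem~\ref{thm:counters} and a union bound over the $m$ counters, that every counter stays within $E$ of its true count at every step, where $E$ is the value set in the algorithm. Feasibility (Condition~\ref{cond:feasibility}) follows immediately: a school's threshold is lowered only while its noisy count is below $\hat{C}_u = C_u - E$, and since scores are distinct each decrement admits at most one new student, so just before any admission the true enrollment is at most $(\text{noisy count}) + E < \hat{C}_u + E = C_u$ and it remains at most $C_u$ afterward.

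Condition~\ref{cond:filledblocking} is the standard threshold argument: if a student $a$ prefers a school $u$ to $\mu^t(a)$, then $t_u > \score{u}{a}$, so every student enrolled at $u$ has a strictly larger score and is preferred by $u$ to $a$, ruling out a filled-seat blocking pair. For Condition~\ref{cond:approxemptyblocking}, a school halts either at $t_u = 0$ or when its noisy count reaches $\hat{C}_u$; in the latter case its true enrollment is at least $\hat{C}_u - E = C_u - 2E$, so any school under-enrolled by more than $\alpha C_u$ must have stopped at $t_u = 0$ and therefore has proposed to every acceptable student, precluding an empty-seat blocking pair --- provided $2E \le \alpha C_u$. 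For school-dominance I would note that the run only lowers thresholds and, by feasibility, never over-fills, so the final threshold vector realizes an \emph{intermediate} matching of a valid school-proposing deferred-acceptance run (one whose proposals halt wherever ours do); Lemma~\ref{lem:prefix} then gives school-dominance directly. Finally, substituting the stated value of $E = \Theta\!\left(\tfrac{\sqrt{m\ln(1/\delta)}}{\eps}\,\polylog\right)$ into the requirement $C_u \ge 2E/\alpha$ yields exactly $C_u = \Omega\!\left(\tfrac{\sqrt{m}}{\eps\alpha}\,\polylog\left(n,m,\tfrac1\delta,\tfrac1\beta\right)\right)$, completing the argument.
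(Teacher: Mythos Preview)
Your proposal is correct and follows essentially the same argument as the paper. You split the claim into the same two lemmas (privacy via the $c=2$, $k=m$ sensitivity bound on the counter streams, then accuracy conditioned on all counters being $E$-accurate), and your proofs of feasibility, no filled-seat blocking pairs, the $2E$ under-enrollment bound, and school-dominance via Lemma~\ref{lem:prefix} all match the paper's line for line.

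The one place you are slightly more informal than the paper is the adaptive/cascade issue you flag. The paper handles this by explicitly factoring the algorithm as a wrapper $\M$ that (i) deterministically publishes thresholds as a post-processing of the noisy counter outputs and (ii) receives the students' $\{-1,0,1\}$ bits as the only private input; this makes it transparent that a neighboring change alters only the one student's own bits (at most $2$ per stream, $2m$ total) and that everything else is adaptive post-processing of already-private outputs. Your appeal to ``the adaptive continual-observation model'' is the same idea, but when you write this up you should make the factoring explicit so that it is clear \emph{why} the feedback through the thresholds does not count toward sensitivity.
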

%
%
%

We prove Theorem~\ref{thm:formal} in two parts.
Lemma~\ref{lem:private} shows that \privateda{\epsilon}{\delta} is
$(\epsilon,\delta)$-differentially private in the students' data.
Lemma~\ref{lem:school-opt} shows that the resulting matching is
school-dominant so long as the capacity at each school is large
enough. These two together imply Theorem~\ref{thm:formal}
directly.

\begin{lemma}\label{lem:private}
  \privateda{\eps}{\delta} is $(\epsilon, \delta)$-differentially
  private.
\end{lemma}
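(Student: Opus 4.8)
The plan is to reduce the privacy of \privateda{\eps}{\delta} to the composition guarantee for the $m$ private counters (\Cref{thm:counters}) and then appeal to closure of differential privacy under post-processing. The observation that makes this work is that the released thresholds $\{t_j\}$ — and in fact the \emph{entire} control flow of the while-loop — are a deterministic function of the counters' (noisy) outputs together with the public capacities $\{C_u\}$. The only channel through which the algorithm touches the private student records is the stream of $-1/0/+1$ bits that each student feeds into the counters, so once I establish that these streams have bounded sensitivity, the counter composition theorem will deliver the claim.

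To invoke \Cref{thm:counters} I need to bound the number of streams $k$ that a single student can affect and the number of bits $c$ she can alter in any one stream. Fix two neighboring databases $D, D'$ differing only in the record of student $a$. The key structural fact is monotonicity: since thresholds only ever decrease, the set $\{u \mid \score{u}{a}\ge t_u\}$ of schools available to $a$ only grows over the run, so her match $\argmax_{\succ_a}\{u\mid \score{u}{a}\ge t_u\}$ moves monotonically up her preference list. Hence $a$ enrolls in, and departs from, any fixed school at most once, contributing at most one $+1$ and one $-1$ to each counter. Changing $a$'s record can therefore alter at most $c=4$ bits in any single stream (at most two to erase her old behavior and two to install her new behavior) and can touch at most $k=m$ streams.

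The subtle point — and the step I expect to be the main obstacle — is the feedback between the counters and the streams: a threshold drops only when a \emph{noisy} counter reports that a school is under-enrolled, so the bits sent to the counters are generated adaptively from the counters' own outputs, and a naive worst-case cascade could in principle let the change in $a$ ripple through the trajectories of all the other students. I would resolve this by noting that the control flow depends on the private data \emph{only through} the counter outputs: for any fixed realization of the counters' outputs, the schedule of which threshold is lowered at each step (and when the loop halts) is identical on $D$ and $D'$, since it reads only the released counts and the public capacities. With that schedule fixed, every student other than $a$ feeds in exactly the same stream on both databases, so the whole discrepancy between the two neighboring input streams is $a$'s own contribution, bounded as in the previous paragraph. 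This is precisely the per-student sensitivity hypothesis of \Cref{thm:counters}, and it holds uniformly over the adaptive choices, which is what licenses applying the counter composition theorem to the interactive execution.

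Finally, I would instantiate \Cref{thm:counters} with $k=m$ and $c=4$: the composition of the $m$ counters, each run with privacy parameter $\eps' = \eps/(16\sqrt{2m\ln(1/\delta)})$ as in the algorithm, is $(\eps,\delta)$-differentially private with respect to a single student's record. Because the output thresholds (and hence the induced matching) are a post-processing of these counter outputs, \privateda{\eps}{\delta} is itself $(\eps,\delta)$-differentially private, which is exactly the statement of the lemma.
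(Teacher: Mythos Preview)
Your proposal is correct and follows essentially the same approach as the paper: reduce to the privacy of the $m$ counters via the composition guarantee of \Cref{thm:counters}, bound the per-student sensitivity of the bit streams using the monotonicity of each student's match, and conclude by post-processing since the published thresholds are a function of the counter outputs alone. The paper packages the adaptive-feedback step by explicitly constructing an equivalent wrapper mechanism $\cM$ that receives the bits and publishes thresholds, whereas you argue directly that the control flow is determined by the released counts; these are the same idea. One small note: the paper states the per-stream sensitivity as $c=2$ (counting a student's nonzero bits to a given school), while your $c=4$ (accounting for both the old and new record's contributions) is the tighter reading of ``bits altered'' and is in fact exactly what the algorithm's choice $\eps' = \eps/(16\sqrt{2m\ln(1/\delta)})$ is calibrated for.
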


\begin{lemma}\label{lem:school-opt}
  With probability at least $1-\beta$, \privateda{\eps}{\delta}
  outputs a set of $\alpha$-approximately stable admission thresholds
  that induces a school-dominant matching, as long as the capacity at
  each school $u$ satisfies
\[
C_u = \Omega\left( \frac{\sqrt{m}}{\eps \alpha} \polylog\left(n, m,
    \frac{1}{\delta}, \frac{1}{\beta} \right) \right).
\]

\noindent If the maximum error of each of the collection of counters is bounded
by $x$ with probability at least, $1-\beta$, we need only 

\[
C_u = \Omega\left(x/ \alpha\right)
\]
\end{lemma}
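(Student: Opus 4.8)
The plan is to condition on a single high-probability ``good event'' under which all $m$ private counters are simultaneously accurate, and then show that on this event feasibility, school-dominance, and the relaxed stability conditions of \Cref{def:approxstable} all follow from the threshold dynamics together with one-sided bookkeeping inequalities. First I would invoke accuracy: each school's counter is an instance of $\counter(\eps', nT)$, so by \Cref{counter-error} a single counter has additive error at most $E$ except with probability $\beta/m$, and a union bound over the $m$ schools gives that, with probability at least $1-\beta$, every counter is accurate to within $E$ at every time step. Call this event $G$ and condition on it throughout. Here $E$ is exactly the quantity set inside the algorithm; by the calibration of $\eps'$ and \Cref{thm:counters} it satisfies $E = O\!\left(\frac{\sqrt m}{\eps}\polylog(n,m,1/\delta,1/\beta)\right)$, and in the abstract form of the claim I would simply take $E = x$.

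Next I would establish feasibility and the filled-seat condition. A school $u$ lowers $t_u$ only while $\mathrm{counter}(u) < \hat{C}_u = C_u - E$, so on $G$ its true enrollment is at most $\mathrm{counter}(u) + E < C_u$ at that moment; since scores are distinct, lowering $t_u$ by one unit makes at most one new student qualify, so the true enrollment rises by at most one per step and never exceeds $C_u$, giving Condition~\ref{cond:feasibility}. Condition~\ref{cond:filledblocking} is then immediate from the definition of $\mu^t$, using neither $G$ nor the capacity bound: for $a$ with $\mu(a)\neq u$, either $\score{u}{a}\geq t_u$, so $u$ is acceptable to $a$ and $a$ chose a school she strictly prefers ($\mu(a)\succ_a u$), or $\score{u}{a} < t_u$, so every student enrolled at $u$ outscores $a$ and is preferred by $u$.

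For school-dominance I would argue that the whole run of \privateda{\eps}{\delta} is a valid partial execution of the true-capacity \schoolda: students always move to their favorite acceptable school, thresholds only decrease, and by the feasibility argument no school ever reaches capacity $C_u$. Consequently the final thresholds dominate coordinatewise those of the school-optimal matching $\mu'$, so the output is an intermediate matching of a \schoolda run terminating at $\mu'$, and \Cref{lem:prefix} yields school-dominance in the sense of \Cref{def:school-dom}. Finally, for Condition~\ref{cond:approxemptyblocking} I would use the stopping rule: at termination each school has $t_u = 0$ or $\mathrm{counter}(u)\geq \hat{C}_u$, and in the latter case on $G$ the true enrollment is at least $C_u - 2E \geq (1-\alpha)C_u$ whenever $C_u\geq 2E/\alpha$. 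Hence any school under-enrolled below $(1-\alpha)C_u$ must have $t_u = 0$, so every student $a$ acceptable to $u$ has $\score{u}{a}\geq 0 = t_u$, placing $u$ in $a$'s acceptable set and forcing $\mu(a)\succeq_a u$, which is strict when $a\notin\mu^{-1}(u)$ and rules out the blocking pair. The only constraint invoked is $C_u\geq 2E/\alpha = \Omega(E/\alpha)$, giving the stated $\Omega\!\left(\frac{\sqrt m}{\eps\alpha}\polylog(\cdots)\right)$ bound and, in the abstract form, $\Omega(x/\alpha)$.

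I expect the main obstacle to be the school-dominance step: one must verify that replacing the exact ``stop when full'' rule by a noisy, capacity-$(C_u - E)$ rule never drives any school strictly past where true \schoolda would stop it, since otherwise the output would fail to be an intermediate \schoolda matching and \Cref{lem:prefix} could not be applied. The distinctness of scores (at most one new enrollee per threshold decrement) together with the one-sided slack $\hat{C}_u = C_u - E$ is exactly what makes the feasibility inequality point in the right direction, so the crux is lining up these bookkeeping bounds rather than any deep combinatorial argument.
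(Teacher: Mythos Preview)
Your proposal is correct and follows essentially the same approach as the paper's own proof: condition on the high-probability event that all counters are accurate to within $E$, then verify (i) no blocking pairs with filled seats directly from the definition of $\mu^t$, (ii) feasibility via the one-sided slack $\hat{C}_u = C_u - E$, (iii) school-dominance by arguing the run is a valid prefix of true-capacity \schoolda and invoking \Cref{lem:prefix}, and (iv) the under-enrollment bound from the termination condition. The only cosmetic differences are ordering (the paper handles the filled-seat condition first) and that the paper cites \Cref{thm:counters} directly for the error bound whereas you unpack it as single-counter accuracy at the calibrated $\eps'$ plus a union bound; these are equivalent and yield the same $E$.
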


\begin{proof}[Lemma~\ref{lem:private}]
  \privateda{\eps}{\delta} outputs a sequence of sets of thresholds
  and nothing else. We will construct a mechanism $\M$, which will
  output the same sequence of thresholds as \privateda{\eps}{\delta},
  for which it is more obvious to prove $(\epsilon,
  \delta)$-differential privacy. This will imply $(\epsilon,
  \delta)$-differential privacy of \privateda{\eps}{\delta}.
  Here is the definition of \M.
\begin{algorithm}\caption{\M}
\begin{algorithmic}
  \STATE{\textbf{initialize:} for each school $u_j$}
  \STATE{Threshold $t_{u_j} = J$}\STATE{Capacity$\hat{C}_{u_j} = C_{u_j} - E$}
  \STATE{$\eps' = \frac{\eps}{16\sqrt{2m\ln\frac{1}{\delta}}}$}
  \STATE{$\text{counter}(u_j) = \counter(\epsilon', nmJ)$}
  \STATE{\textbf{while} there is some under-enrolled school $u_j$:
  counter$(u_j) \leq \hat{C}_{u_j}$ and $t_{u_j} > 0$}
\INDSTATE[1]{Let $t'_{u_j} = t_{u_j} - 1$}\;
\INDSTATE[1]{Publish thresholds $(t_{u_1}, \ldots, t'_{u_j}, \ldots, t(u_m))$}\;
\INDSTATE[1]{Receive bits $b_{u_j'} \in \{-1,0,1\}$ for each $u_{j'}$}\;
\INDSTATE[1]{Send $b_{u_j}$ to counter($u_j$)}\;
\end{algorithmic}
\end{algorithm}

We define the input bits to the algorithm \M as follows. For a fixed
execution of the while loop, we will define the bits $b_{u_{j'}}$ to
give to \M. Let $u_j$ be the school which lowered its threshold in
this timestep.  Let $b_{u_j} = 1$ if and only if, for the unique
student $a_i$ such that $\score{u_j}{a_i} = t'_{u_j}$, it is true that
$u_j = \argmax_{\succ_{a_i}} \{u \mid \score{u}{a_i} \geq t_{u_j}\}$
($a_i$ prefers $u_j$ to all other schools for which her score
surpasses the threshold). Let $b_{u_{j'}} = -1$ if and only if $b_{u_j}
  = 1$ and also $u_{j'} = \secondargmax_{\succ_{a_i}} \{u \mid
  \score{u}{a_i} \geq t_{u_j}\}$ ($u_j$ is $a_i$'s favorite available
  school and $u_{j'}$ is her second favorite). For all other $j''$,
  let $b_{u_{j''}} = 0$.

  Then, there are at most $2m$ nonzero bits sent to \M about a
  particular student $a_i$, and at most $2$ nonzero bits sent by a
  particular $a_i$ to any school $u_j$. These bits are the only
  interface \M has with private data. Furthermore, \M and \privateda{\eps}{\delta} 
  have the same distribution over output data. So it suffices to show
  that \M is $(\eps, \delta)$-differentially private.

  Let $f : \{J\}^m \times [n] \to \{J\}^m$ be the function that, as a
  function of the previous thresholds and counter values, outputs the
  new set of thresholds at each time $t$.  Then, the thresholds
  published by \M are a composition of $f$, $m$ instantiations of
  \counter$(\eps', nmJ)$, and previously computed data. Thus, it
  suffices to show the composition of the $m$ counters satisfy
  $(\epsilon, \delta)$-differential privacy. By construction, each
  school $u_j$ receives at most $2$ nonzero bits from a given student,
  and no student's data creates more than $2m$ nonzero bits in all
  streams together. By Theorem~\ref{thm:composition} and Lemma~\ref{lem:l-comp},
  the composition of $m$ $\counter(\eps', n mJ)$ satisfy $(\eps,
  \delta)$-differential privacy when no stream has more than $2$ bits
  affected by a single agent's data and no student has more than $2m$
  total nonzero bits in any stream. Thus, \M (and also \privateda{\eps}{\delta})
  satisfies $(\eps, \delta)$ differential privacy.
\end{proof}

\begin{proof}[Lemma~\ref{lem:school-opt}]
  We prove that the output thresholds $\{t_{u_j}\}$ induce an
  $\alpha$-approximately stable, school-dominant matching $\mu$.
  Recall that $\mu$ is defined to be the matching where each student
  chooses her favorite school whose threshold she passes.

  We claim that there can be no blocking pairs with filled seats in
  $\mu$. Suppose some student $a_i$ wishes to attend $u_j$. Then, it
  is either the case that $\score{u_j}{a_i} \geq t_{u_j}$ or
  $\score{u_j}{a_i} < t_{u_j}$. In the first case, $a_i$ cannot block
  with $u_j$: she could have gone to $u_j$ and chose a school she
  preferred to $u_j$. In the second case, consider some $a_{i'}$ such
  that $u_j = \mu(a_{i'})$; this implies $\score{u_j}{a_{i'}} \geq
  t_{u_j}$. Thus, $\score{u_j}{a_{i'}} > \score{u_j}{a_i}$, so $a_{i'}
  \succ_{u_j} a_i$, and $a_i$ doesn't block with $a_{i'}$, so there
  are no blocking pairs with filled seats.

  By Theorem~\ref{thm:counters} and union bound, we know that the error of
  all $m$ counters over all time steps is bounded by $E$ except with
  probability $\beta$, where
  \[
  E =  \frac{128\sqrt{m\ln(1/\delta)}}{\eps} \ln\left(\frac{2m}{\beta}
        \right)\left( \sqrt{\log(nmJ)} \right)^5.
  \]

  So, we condition on the event that all schools' counters are
  accurate within $E$ throughout the run of \privateda{\eps}{\delta}
  for the remainder of our argument.

  We first claim that no school is over-enrolled in $\mu$. Consider the
  last time $u_j$ lowered its threshold to $t_{u_j}$. Let $n_{u_j}$
  denote the number of students tentatively matched to $u_j$ just
  before this final lowering of $t_{u_j}$.  By definition, $u_j$ only
  lowers its threshold when $\counter(u_j) < \hat{C}_{u_j} = C_{u_j} -
  E$, so

\[C_{u_j} - E = \hat{C}_{u_j} > \counter(u_j) \geq n_{u_j} - E \geq |\mu(u_j)| - E  - 1\]

where the first equality is by definition, the first inequality comes
from the fact that $u_j$ lowered its threshold, the third from the
accuracy we've conditioned on from the counters, and the final from
the fact that $u_j$ never again lowers its threshold. Thus, $C_{u_j}
\geq |\mu(u_j)|$, and $u_j$ is not over-enrolled.

Now, we show no school is under-enrolled by more than $2E$, unless
$t_{u_j} = 0$. When the algorithm terminates, each school $u_j$ either
has a threshold $t_{u_j} = 0$ or

\[|\mu(u_j)| + E \geq \counter(u_j) \geq \hat{C}_{u_j} = C_{u_j} - E\]

where the first equality comes from the conditional bound on the error
of the counters, the second from the fact that the algorithm
terminated, and the final one from the definition of
$\hat{C}_{u_j}$. Thus, $|\mu(u_j)| \geq C_{u_j} - 2E$ whenever
$t_{u_j} > 0$, so no school is under-enrolled by more than an
$\alpha$-fraction of its seats so long as

  \[
  C_{u_j} \geq 2E/\alpha.
  \]

  Finally, we show school dominance. We will now show that $\mu$, the
  matching corresponding to the thresholds output by
  $\privateda{\eps}{\delta}$, is also achieved by running \schoolda on
  the same instance, and halting early.

  No school is over-enrolled, by our argument above, \emph{at any point
    during the run of the algorithm}. So, each proposal made by $u_j$
  would be a valid proposal to make in \schoolda with full capacity.
  Thus, the algorithm terminates with each school having made (weakly)
  fewer proposals than it would have in \schoolda. Since each school
  makes its proposals in the same order (according to $\succ_{u_j}$),
  this implies that $\mu$ is a matching that corresponds to some
  intermediate point in \schoolda using with the same ordering of
  proposals. Thus, by Lemma~\ref{lem:prefix}, $\mu$ is
  school-dominant.  This argument is entirely parametric in $E$, the
  bounded error term from our counters, so the second part of the
  claim follows directly.
\end{proof}

Now, we proceed to prove the formal version of
Theorem~\ref{thm:k-list},  which is a
basic extension of Theorem~\ref{thm:formal} using the fact that
the sensitivity of School-Propose is reduced when students have fewer
schools in which they are interested.

\begin{theorem}\label{thm:kformal}
  Suppose each student has a preference list of length at most $k$.
  Then, \privateda{\frac{\epsilon \sqrt{m}}{2\sqrt{k}}}{\delta} is $(\epsilon,\delta)$-private and thus
  $(\epsilon+\delta)$-approximately truthful. With probability at least
  $1-\beta$, it outputs a set of $\alpha$-approximately stable
  admission thresholds that induces a school-dominant matching, as
  long as the capacity at each school
\[
C_u = \Omega\left(\frac{\sqrt{k}}{\alpha\epsilon}\polylog\left(
    n, m, \frac{1}{\delta}, \frac{1}{\beta}\right)\right)
\]
\end{theorem}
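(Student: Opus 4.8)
Write $\epsilon_0 := \frac{\epsilon\sqrt m}{2\sqrt k}$ for the (inflated) first argument, so that the statement concerns \privateda{\epsilon_0}{\delta}. The plan is to re-run the two-part proof of Theorem~\ref{thm:formal} — the privacy Lemma~\ref{lem:private} and the accuracy Lemma~\ref{lem:school-opt} — replacing the number of schools $m$ in the sensitivity accounting by the list length $k$ wherever it appears. The only place $m$ enters the privacy argument is through the claim that a single student's record can perturb all $m$ counter streams; once we show that a student who ranks at most $k$ schools perturbs only $O(k)$ of them, the composition bound of Theorem~\ref{thm:counters} improves accordingly, and the per-counter noise shrinks. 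Truthfulness is then inherited from privacy exactly as before, via our earlier theorem that any $(\epsilon,\delta)$-differentially private threshold mechanism, composed with the induced-matching map $F_\succ$, is $(\epsilon+\delta)$-approximately dominant-strategy truthful.

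The key step is the stream-sensitivity bound. In \privateda{\epsilon_0}{\delta}, a student $a_i$ sends a nonzero bit to the counter of school $u_j$ only when she tentatively enrolls in or departs from $u_j$, and she only ever enrolls in a school on her own preference list; a student with a list of length at most $k$ therefore touches at most $k$ of the $m$ streams, with at most two nonzero bits per stream (a single $+1$ at enrollment and a single $-1$ at departure). For a pair of neighboring databases differing in one student's record, the streams that can change lie in the union of her true and reported lists, so at most $2k$ streams are altered, each in only $O(1)$ positions. We may therefore apply Theorem~\ref{thm:counters} with the number of affected streams taken to be $O(k)$ and per-stream sensitivity $c = O(1)$, in place of the $m$ and $c=O(1)$ used in Lemma~\ref{lem:private}.

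For the privacy bookkeeping, observe that feeding first argument $\epsilon_0$ to \privateda{\epsilon_0}{\delta} instantiates every school's counter at level $\epsilon' = \epsilon_0/(16\sqrt{2m\ln(1/\delta)}) = \Theta\!\left(\epsilon/\sqrt{k\ln(1/\delta)}\right)$, the two factors of $\sqrt m$ cancelling. Substituting $O(k)$ affected streams and $c=O(1)$ into Theorem~\ref{thm:counters} shows that this $\epsilon'$ is already small enough to make the composition of the $m$ counters $(\epsilon,\delta)$-differentially private (one in fact obtains $(\epsilon/4,\delta)$, the slack absorbed into constants). Since the remainder of \privateda{\epsilon_0}{\delta} is unchanged, the simulating mechanism \M from the proof of Lemma~\ref{lem:private} still reproduces the output distribution exactly, so the bound transfers to \privateda{\epsilon_0}{\delta}, and $(\epsilon+\delta)$-approximate truthfulness is immediate.

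The accuracy side is entirely parametric in the counter error $E$. Substituting $\epsilon_0$ into the error expression of Theorem~\ref{counter-error} yields $E = O\!\left(\frac{\sqrt k}{\epsilon}\,\polylog(n,m,1/\delta,1/\beta)\right)$, the $\sqrt m$ in the numerator again cancelling against $\epsilon_0$. Every claim in the proof of Lemma~\ref{lem:school-opt} — no blocking pairs with filled seats, no over-enrollment, under-enrollment bounded by $2E$, and school-dominance through Lemma~\ref{lem:prefix} — is phrased purely in terms of $E$ and so carries over verbatim, giving $C_u = \Omega(E/\alpha) = \Omega\!\left(\frac{\sqrt k}{\alpha\epsilon}\,\polylog(n,m,1/\delta,1/\beta)\right)$ as required. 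The main obstacle is precisely the stream-sensitivity claim of the second paragraph: one must confirm that short lists confine a student's footprint to $O(k)$ streams even across neighboring databases, where her entire list may change, and that the adaptive, feedback-driven generation of the streams — thresholds depend on noisy counts, which depend on earlier bits — does not inflate the per-stream sensitivity beyond $O(1)$. Both are handled by the accounting via the simulator \M in Lemma~\ref{lem:private}, now specialized from $m$-length to $k$-length lists.
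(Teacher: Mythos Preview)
Your proposal is correct and follows essentially the same approach as the paper: bound the per-student stream sensitivity by $O(k)$ rather than $O(m)$, recompute $\epsilon'$ so that the $\sqrt m$ factors cancel, invoke the counter composition bound (Theorem~\ref{thm:counters}) for $(\epsilon,\delta)$-privacy, and then reuse Lemma~\ref{lem:school-opt} verbatim with the improved error $E=O(\sqrt k/\epsilon\cdot\polylog)$. The one place you diverge from the paper is in justifying the sensitivity bound: the paper simply invokes Remark~\ref{remark:limit}, which \emph{modifies} the algorithm to ignore a student after $k$ acceptances, whereas you argue the bound holds automatically because thresholds are monotone and a student only ever moves to strictly more-preferred schools on her length-$k$ list; your argument is valid and in fact explains why the safeguard in the remark is never triggered.
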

\begin{remark}\label{remark:limit}
  We assume that the algorithm ignores students after they have
  accepted $k$ or more schools' proposals.
\end{remark}

\begin{proof}[Theorem~\ref{thm:k-list}]
  We prove $(\epsilon, \delta)$-privacy, which again reduces to
  proving $(\epsilon, \delta)$-differential privacy of the set of $m$
  counters. By a simple calculation, \privateda{\frac{\epsilon
    \sqrt{m}}{2\sqrt{k}}}{\delta} uses

\[\epsilon' =
\frac{\epsilon}{4\sqrt{k\ln{\frac{1}{\delta}}}}\]

as the privacy parameter for the $m$ counters it uses.
Theorem~\ref{thm:composition} states that a collection of $m$
$\counter(\epsilon', nT)$ with total sensitivity $\Delta$ (and
individual sensitivity $c$) satisfy $(\epsilon, \delta)$-differential
privacy so long as

  \[\epsilon' \leq \frac{\epsilon}{2c \sqrt{\Delta\ln{\frac{1}{\delta}}}}.\]

  Remark~\ref{remark:limit} limits the total amount of sensitivity
  School-Propose will have; a student will be able to affect at most
  $2k$ bits in the input stream, so $\Delta \leq 2k$, and at most $2$
  per school, so $c\leq 2$. Thus, it suffices to use privacy parameter

\[\epsilon' \leq \frac{\epsilon}{4\sqrt{k\ln{\frac{1}{\delta}}}},\]

so our algorithm is $(\epsilon, \delta)$-differentially
private. Furthermore, Theorem~\ref{thm:composition} and a union bound imply
the maximum error any one of the counters will have at any time during
the execution of the algorithm is

\[E \leq \frac{128\sqrt{k\ln(1/\delta)}}{\eps} \ln\left(\frac{2m}{\beta}
        \right)\left( \sqrt{\log(nmJ)} \right)^5. \]

with probability $1-\beta$. Thus, by Lemma~\ref{lem:school-opt}, we
get the desired guarantee for $\alpha$-approximate stability and
school-dominance.
\end{proof}

\section{Conclusions}
In this paper we applied differential privacy as a tool to design a
many-to-one stable matching algorithm with strong incentive guarantees
for the student side of the market. To the best of our knowledge, our
work is the first work to show positive truthfulness results for the
non-optimal side of the market, under \emph{worst-case} preferences,
for \emph{all} participants on the non-optimal sie of the market.

Additionally, although we have not focused on this, our algorithm also
provides strong \emph{privacy} guarantees to the students. Each
student, upon learning the school thresholds (and hence the school
that she herself is matched to) can learn almost nothing about either
the preferences or scores of the other students (i.e. almost nothing
about the preferences that the other students hold over schools, or
the preferences that schools hold over the other students). Here
``almost nothing'' is the formal guarantee of differential privacy,
which in particular implies that for every student $a$, no matter what
her prior belief over the private data of some other student $a'$ is,
her posterior belief over $a'$s data would be almost the same in the
two worlds in which $a'$ participates in the mechanism, and in which
she does not. These guarantees might themselves be valuable in
settings in which the matching being computed is sensitive --
e.g. when computing a matching between patients and drug trials, for
example.

\bibliographystyle{plainnat}

\bibliography{sources}

\appendix
\section{Privacy Analysis for Counters}

\citet{chan-counter} show that $\counter(\epsilon, T)$ is
$\epsilon$-differentially private with respect to single changes in
the input stream, when the stream is generated non-adaptively. For our
application, we require privacy to hold for a large number of streams
whose joint-sensitivity can nevertheless be bounded, and whose entries
can be chosen adaptively. To show that \counter is also private in
this setting (when $\epsilon$ is set appropriately), we first present
a slightly more refined composition theorem.

\subsection{Composition}
\label{sec:compose}
An important property of differential privacy is that it degrades
gracefully when private mechanisms are composed together, even
adaptively. We recall the notion of an \emph{adaptive composition
experiment} due to \citet{boostingDP}:

\begin{itemize}
  \item Fix a bit $b \in \{0, 1\}$ and a class of mechanisms $\cM$.
  \item For $t = 1 \dots T$:
    \begin{itemize}
      \item The adversary selects two databases $D^{t, 0}, D^{t, 1}$ and a
        mechanism $\cM_t \in \cM$.
      \item The adversary receives $y_t = \cM_t(D^{t, b})$
    \end{itemize}
\end{itemize}
%
The ``output'' of an adaptive composition experiment is the view of the
adversary over the course of the experiment. The experiment is said to be
$\epsilon$-differentially private if
\[
  \max_{S \subseteq \cR}\frac{\Pr[V^0 \in S]}{\Pr[V^1 \in S]} \leq
  \exp(\epsilon),
\]
and $(\eps, \delta)$-differentially private if
\[
\max_{S\subset \cR, \Pr[V^0\in S]\geq \delta} \frac{\Pr[V^0\in S] -
  \delta}{\Pr[V^1\in S]} \leq \exp(\eps),
\]
where $V^0$ is the view of the adversary with $b = 0$, $V^1$ is the
view of the adversary with $b = 1$, and $\cR$ is the range of outputs.

Any algorithm that can be described as an instance of this adaptive
composition experiment (for an appropriately defined adversary) is
said to be an instance of the class of mechanisms $\cM$ under
\emph{adaptive $T$-fold composition}.

A very useful tool to analyze private algorithms is the following
theorem that allows us to analyze the ``composition'' of private
algorithms.

\begin{theorem}[\citet{boostingDP}]
\label{thm:composition}
Let $\mathcal{A}\colon \mathcal{U} \rightarrow \mathcal{R}^T$ be a
$T$-fold adaptive composition\footnote{
See Section~\ref{sec:compose} and \citep{boostingDP} for further discussion.
} of $(\eps, \delta)$-differentially
private algorithms. Then $\mathcal{A}$ satisfies $(\eps', T\delta +
\delta')$-differential privacy for
\[
\eps' = \eps\sqrt{2T\ln(1/\delta')} + T\eps(e^\eps - 1).
\]
In particular, for any $\eps\leq 1$, if $\mathcal{A}$ is a $T$-fold
adaptive composition of $(\eps/\sqrt{8T\ln(1/\delta)},
0)$-differentially private mechanisms, then $\mathcal{A}$ satisfies
$(\eps, \delta)$-differential privacy.
\end{theorem}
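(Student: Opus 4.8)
The plan is to bound the \emph{privacy loss random variable} of the composed experiment and show that it concentrates. Fix neighboring inputs and consider the adaptive composition experiment defining the views $V^0$ and $V^1$. For a transcript $y = (y_1, \ldots, y_T)$, define the privacy loss
\[
L(y) = \ln \frac{\Pr[V^0 = y]}{\Pr[V^1 = y]}.
\]
Because in the experiment the mechanism $\cM_t$ and the pair of databases $D^{t,0}, D^{t,1}$ at round $t$ are functions only of the earlier outputs $y_1, \ldots, y_{t-1}$, the likelihood ratio factorizes, so $L(y) = \sum_{t=1}^T L_t$, where $L_t$ is the single-round privacy loss of $\cM_t$ conditioned on the history $y_{<t}$. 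The goal is to show that, with probability at least $1 - \delta'$ over $y \sim V^0$, this sum is at most $\eps'$; a standard argument then converts such a high-probability bound into an approximate-privacy guarantee, since for any event $S$,
\[
\Pr[V^0 \in S] \le e^{\eps'}\Pr[V^1 \in S] + \Pr[L(y) > \eps'].
\]

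First I would handle the pure case $\delta = 0$. For an $(\eps, 0)$-private mechanism the single-round loss satisfies $|L_t| \le \eps$ pointwise, and a direct computation bounds its conditional mean by the Kullback--Leibler divergence between the two output distributions: $\Ex{y_t \sim \cM_t(D^{t,0})}{L_t \mid y_{<t}} \le \eps(e^\eps - 1)$. Thus each $L_t$ lies in an interval of width $2\eps$ and has conditional expectation at most $\eps(e^\eps-1)$, uniformly over the adaptively chosen history.

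Next, the centered sequence $\{L_t - \Ex{}{L_t \mid y_{<t}}\}$ is a bounded martingale difference sequence under the law of $V^0$, so Azuma--Hoeffding applies. With each difference ranging over a width-$2\eps$ interval, the sum of squared widths is $4T\eps^2$, and the inequality yields
\[
\Prob{y \sim V^0}{\sum_{t=1}^T L_t > T\eps(e^\eps-1) + \eps\sqrt{2T\ln(1/\delta')}} \le \delta'.
\]
The deviation term is exactly $\eps\sqrt{2T\ln(1/\delta')}$, so the right-hand threshold equals $\eps'$, completing the pure-DP case with failure probability $\delta'$. To upgrade to $(\eps, \delta)$-private rounds I would invoke the coupling characterization of approximate differential privacy: each $(\eps,\delta)$-private $\cM_t$ is, outside an event of probability $\delta$ on each of $D^{t,0}, D^{t,1}$, indistinguishable from an $(\eps, 0)$-private mechanism, so a union bound over the $T$ rounds contributes an additive $T\delta$, giving overall $(\eps', T\delta + \delta')$-privacy.

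Finally, the ``in particular'' corollary follows by substituting the per-round parameter $\eps/\sqrt{8T\ln(1/\delta)}$ with $\delta' = \delta$: the leading term of $\eps'$ becomes exactly $\eps/2$, while $T\eps(e^\eps-1)$ is lower order for $\eps \le 1$, so $\eps' \le \eps$. I expect the main obstacle to be the adaptivity. Because the databases and mechanisms at each round depend on earlier outputs, the per-round losses are not independent, and the concentration must be carried out as a martingale bound rather than a sum of independent bounded variables; moreover, the coupling reduction that controls the $\delta$ slack must hold uniformly over the adaptively chosen rounds, and getting the deviation constant to land on precisely $\eps\sqrt{2T\ln(1/\delta')}$ requires tracking the interval widths carefully through Azuma--Hoeffding.
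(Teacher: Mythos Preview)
The paper does not prove this theorem at all: it is stated as a cited result from \citet{boostingDP} and used as a black box. Your sketch is the standard proof from that source---factorize the privacy loss across rounds, bound each conditional mean by the KL divergence $\eps(e^\eps-1)$, apply Azuma--Hoeffding to the bounded martingale difference sequence, and absorb the per-round $\delta$ via a union bound---and is correct in outline, so there is nothing to compare against here.
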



For a more refined analysis in our setting, we now state a straightforward
consequence of a composition theorem of \citet{boostingDP}.

\begin{lemma}[\citet{boostingDP}]
  \label{lem:l-comp}
  Let $\Delta \geq 0$. Under adaptive composition, the class of
  $\frac{\epsilon}{\Delta}$-private mechanisms satisfies
  $\epsilon$-differential privacy and the class of
  $\frac{\eps}{2c\sqrt{2\Delta\ln(1/\delta)}}$-private mechanisms
  satisfies $(\eps, \delta)$-differential privacy, if the adversary
  always selects databases satisfying
  \[
   \mbox{for all } t\; \left|D^{t,0} - D^{t, 1} \right|\leq c, 
   \]
   \[
   \mbox{and also }\;     \sum_{t = 1}^T \left|D^{t, 0} - D^{t, 1}\right| \leq \Delta.
 \]
\end{lemma}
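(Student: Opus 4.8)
The plan is to derive both claims from \emph{group privacy} together with the composition theorem of \citet{boostingDP} (Theorem~\ref{thm:composition}), viewing the adaptively-chosen database pairs as a $T$-fold composition whose effective number of ``active'' rounds is controlled by $\Delta$ rather than by $T$. The common first step for both parts is group privacy: since each mechanism $\cM_t$ in the class is $\eps_0$-differentially private with respect to a single change, and the adversary's two databases differ in $k_t := |D^{t,0}-D^{t,1}| \le c$ coordinates, $\cM_t$ is $(k_t\eps_0)$-differentially private (and in particular $(c\eps_0)$-differentially private) on the pair the adversary actually selects. Whenever $D^{t,0}=D^{t,1}$ the two output distributions coincide, so that round is $0$-differentially private and contributes nothing to the privacy loss.

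For the pure-$\eps$ claim I would invoke basic composition for pure differential privacy (the parameters add, even under adaptive composition): the total privacy loss between the $b=0$ and $b=1$ views is at most $\sum_{t=1}^T k_t\eps_0 \le \eps_0\sum_t k_t \le \eps_0\Delta$. Setting $\eps_0 = \eps/\Delta$ makes this exactly $\eps$, giving $\eps$-differential privacy.

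For the $(\eps,\delta)$ claim I would reduce to the advanced composition bound $\eps' = \eps_0'\sqrt{2T'\ln(1/\delta')} + T'\eps_0'(e^{\eps_0'}-1)$ of Theorem~\ref{thm:composition}, applied with per-round parameter $\eps_0' = c\eps_0$ and with $T'$ the number of active rounds. Because each active round contributes at least $1$ to $\sum_t k_t \le \Delta$, there are at most $\Delta$ of them, and the bound is monotone increasing in $T'$, so I may substitute $T' = \Delta$. Taking $\delta' = \delta$ and $\eps_0 = \frac{\eps}{2c\sqrt{2\Delta\ln(1/\delta)}}$ makes the first (fluctuation) term equal exactly $\eps/2$; it then remains to check that the second (mean) term $\Delta(c\eps_0)(e^{c\eps_0}-1)$ is at most $\eps/2$, which follows for $\eps\le 1$ and $\delta\le 1/e$ using $e^x-1\le 2x$ for $x\le 1$, so that term is $O(\eps^2/\ln(1/\delta))\le \eps/2$. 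Since the base mechanisms are pure ($\delta_0=0$), the output $\delta$ is just $\delta'=\delta$.

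The main obstacle is exactly the bookkeeping that replaces the true (and possibly enormous) number of rounds $T$ by the effective count $\Delta$: I must argue that inactive rounds ($D^{t,0}=D^{t,1}$) may be excised from the composition without changing the privacy of the joint view, so that only the $\le\Delta$ active rounds feed into Theorem~\ref{thm:composition}. The cleanest rigorous route is to pass through the per-round form of the \citet{boostingDP} concentration argument: the privacy-loss random variable $Z_t$ of an inactive round is identically zero, so it drops out of both the mean $\sum_t \ex{Z_t}$ and the Azuma variance proxy (where $|Z_t|\le c\eps_0$ on active rounds), leaving a sum over at most $\Delta$ active rounds. Everything else is then the constant-tracking verification above, and the two stated conclusions follow.
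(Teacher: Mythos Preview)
The paper does not actually supply a proof of this lemma: it is introduced as ``a straightforward consequence of a composition theorem of \citet{boostingDP}'' and left at that, so there is nothing in the text for your proposal to be compared against. Your write-up is exactly the intended derivation. Group privacy converts each round's guarantee to $(k_t\eps_0)$-differential privacy with $k_t=|D^{t,0}-D^{t,1}|$; basic (additive) adaptive composition then yields the pure-$\eps$ claim via $\sum_t k_t\eps_0\le\Delta\eps_0=\eps$. For the $(\eps,\delta)$ claim your reduction to the Azuma/privacy-loss argument inside the \citet{boostingDP} proof---dropping inactive rounds (where $Z_t\equiv 0$) and bounding the at most $\Delta$ active rounds by $|Z_t|\le c\eps_0$---is the right way to replace $T$ by $\Delta$, and your constant check for both the fluctuation and mean terms is correct under the standard regime $\eps\le 1$, $\delta\le 1/e$.
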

In other words, the privacy parameter of each mechanism should be calibrated for
the total distance between the databases, over the whole composition.
This is useful for analyzing the privacy of the counters in our
algorithm, which collectively have bounded sensitivity.

\subsection{Details for Counters}
\label{counter-details}

We reproduce Binary mechanism here in order to refer to its internal
workings in our privacy proof.

First, it is worth explaining the intuition of the \counter. Given a
bit stream $\sigma \colon [T] \rightarrow \{-1, 0, 1\}$, the algorithm
releases the counts $\sum_{i=1}^t \sigma(i)$ for each $t$ by
maintaining a set of partial sums $\sum[i, j] \coloneqq \sum_{t=i}^j
\sigma(t)$. More precisely, each partial sum has the form $\sum[2^i
+ 1, 2^i + 2^{i - 1}]$, corresponding to powers of $2$.

In this way, we can calculate the count $\sum_{i=1}^t \sigma(i)$ by
summing at most $\log{t}$ partial sums: let $i_1 < i_2 \ldots < i_m$
be the indices of non-zero bits in the binary representation of $t$,
so that
\begin{align*}
  \label{eq:binary}
  \sum_{i=1}^t  \sigma(i)&= \sum[1, 2^{i_m}] + \sum[2^{i_m}+1, 2^{i_m}
  + 2^{i_{m-1}}]\\& + \ldots + \sum[t-2^{i_1} + 1, t].
\end{align*}
Therefore, we can view the algorithm as releasing partial sums of
different ranges at each time step $t$ and computing the counts is
simply a post-processing of the partial sums. The core algorithm is
presented in Algorithm~\ref{alg:binary-mechanism}.


\begin{algorithm}[h!]
     \caption{$\counter(\eps, T)$}
     \begin{algorithmic}\label{alg:binary-mechanism}
       \STATE{\textbf{Input:} A stream $\sigma\in \{-1,1\}^T$}
       \STATE{\textbf{Output: } $B(t)$, estimate of $\sum_{i=1}^t
         \sigma(i)$ at each $t\in [T]$}
       \FORALL{$t\in [T]$}
       \STATE Express $\displaystyle t = \sum_{j=0}^{\log{t}} 2^j\bin_j(t)$.
       \STATE Let $i \leftarrow \min_j\{\bin_j(t) \neq 0\}$
       \STATE $a_i\leftarrow \sum_{j < i} a_j + \sigma(t) $,
               $(a_i  = \sum[t-2^i + 1, t])$
               \FOR{$0\leq j \leq i - 1$}
               \STATE Let $a_j \leftarrow 0$ and $\hat{a_j} \leftarrow   0$
     \ENDFOR
     \STATE Let $\hat{a_j} = a_j + \Lap(\log(T) /\eps)$
     \STATE Let $\displaystyle B(t) = \sum_{i: \bin_i(t)\neq 0} \hat{a_i}$
      \ENDFOR

     \end{algorithmic}
\end{algorithm}

\subsection{Counter Privacy under Adaptive Composition}

\countercomp*

\begin{proof}
  The composition of $m$ counters is essentially releasing a
  collection of noisy partial sums adaptively. We need to first frame
  this setting as an advanced composition experiment defined in
  Appendix~\ref{sec:compose}. First, we treat each segment $\sigma[a,b]$ in a
  stream as a database. For each such database, we are releasing the
  sum by adding noise sampled from the Laplace distribution:
  \[
  \Lap\left(\frac{2c\sqrt{2kc\ln(1/\delta)}\log(T)}{\eps}\right),
  \]
  which is $\frac{\eps}{2c\sqrt{2kc \ln(1/\delta)\log(T)}}$-private
  mechanism (w.r.t.\ a single bit change). We know that changing an
  agent's data changes at most $c$ bits in each stream, and affects at
  $k$ streams, and also each bit change can result in $\log(T)$ bits
  changes across different stream-segment databases. Therefore, we can
  bound the total distance between all pairs stream-segment databases
  by
  \[
  \Delta \leq k c \log(T).
  \]
  By Lemma~\ref{lem:l-comp}, we know that the composition of all $m$
  counters under our condition satisfies $(\eps, \delta)$-differential
  privacy. 
\end{proof}

\section{Proofs of Matching Lemmas}\label{sec:matching}

We state one more lemma which we will use in the proofs of
Lemmas~\ref{lem:prefix}.

\begin{lemma}\label{lem:equiv}
  Consider a set $P_u$ of proposals made by each school $u$ according
  to some prefix of school-proposing $DA$. Let $P\subseteq A \times U$
  be the set of proposals made by all schools. Then, the matching
  $\mu$ which results from $P$ is unique (and independent of the order
  in which proposals are made), assuming students are truthful.
\end{lemma}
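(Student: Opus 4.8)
The plan is to identify the matching produced from a fixed proposal set $P$ with the \emph{student-greedy} assignment
\[
\mu^*(a) = \argmax_{\succ_a}\{u \mid (a,u) \in P\},
\]
(with $\mu^*(a) = \nomatch$ if no school in $P$ has proposed to $a$). Since $\mu^*$ is manifestly a deterministic function of $P$ alone, establishing that any valid processing of the proposals in $P$ terminates at $\mu^*$ immediately gives both uniqueness and order-independence. So the real work is to show that, regardless of the order in which the proposals are made, every truthful student ends up holding her favorite school among those that have proposed to her.

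The key is a monotonicity (``permanence'') invariant. In the threshold realization of \schoolda, a school only ever \emph{lowers} its threshold, so once a pair $(a,u)$ enters $P$ it never leaves: the set of schools that have proposed to $a$ is nondecreasing over the course of the process. A truthful student always tentatively holds her $\succ_a$-favorite among the schools currently available to her, and discards the rest. Consequently the school a student holds can only improve (with respect to $\succ_a$) as proposals are processed, and can never revert to a previously-discarded, less-preferred school. I would make this precise by induction on the number of processed proposals: after every proposal in $P$ has been incorporated, the set of available schools for $a$ is exactly $\{u \mid (a,u)\in P\}$, and by the invariant $a$ holds its $\succ_a$-maximum, i.e.\ $\mu^*(a)$. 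Because this conclusion refers only to $P$ and $\succ_a$, it is insensitive to the interleaving of proposals, which yields order-independence. (An alternative, equivalent route is the classical McVitie–Wilson exchange argument showing the set of rejections is order-invariant; the permanence invariant is the cleaner version here because schools never retract a proposal.)

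Finally I would assemble the student-wise conclusions into the matching: since each student's terminal assignment is pinned to $\mu^*(a)$ independently of order, the full matching equals $\mu^*$ and is therefore unique. The one point requiring care is the many-to-one capacity bookkeeping: I must confirm that the greedy assignment $\mu^*$ is actually realized (rather than being truncated by a school filling up), which holds because $P$ arises from a \emph{prefix} of \schoolda — each school stops lowering its threshold precisely at capacity, so no over-enrollment occurs in the tentative matching. I expect this capacity interaction to be the main obstacle: the argument must show that the incremental switching of students (and the associated decrements/increments to the schools' enrollment counts) never causes a school to reject a student already in $P$, so that the per-student improvement invariant is genuinely maintained under any arbitrary ordering of the proposals. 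Feasibility of the resulting $\mu^*$ itself is guaranteed by the prefix structure and is argued separately in Lemma~\ref{lem:school-opt}.
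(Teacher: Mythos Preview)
Your approach is correct and is essentially the paper's argument: identify the outcome with the student-greedy map $\mu^*(a)=\argmax_{\succ_a}\{u:(a,u)\in P\}$ and observe that this depends only on $P$. The paper's proof is exactly this two-line observation—each student ends up holding her favorite proposal in $P$, regardless of the order in which they arrived—phrased in the threshold language (thresholds only descend, and a student selects her favorite school whose threshold she clears).

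Where you diverge is in the capacity discussion, which is overkill here and stems from a slightly mistaken model. In the threshold formulation of \schoolda, a ``proposal'' $(a,u)\in P$ is nothing more than the event that $u$'s threshold has dropped below $\score{u}{a}$; once that happens, $a$ is unconditionally free to sit at $u$, and there is no mechanism by which $u$ can later \emph{reject} $a$ for being full. Hence ``the matching which results from $P$'' is, \emph{by definition}, the student-greedy assignment $\mu^*$, with no capacity interaction to verify. Whether $\mu^*$ happens to be feasible is a separate question (guaranteed in the non-private case because $P$ was produced by a valid prefix, and handled for the private algorithm in the proof of Lemma~\ref{lem:school-opt}), but it plays no role in the uniqueness/order-independence claim of this lemma. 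So your induction and permanence invariant are fine, just more machinery than needed; you can drop the ``main obstacle'' paragraph entirely.
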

\begin{proof}
  Each student ultimately accepts her most preferred proposal among the set of proposals she has received, independent of their ordering. (i.e. admissions thresholds only descend, and she picks her most preferred school amongst those schools with thresholds below her scores). Thus, each school $u$
  will be matched to the subset of $P_u$ which finds $u$ to be their
  favorite offer, independent of the order in which proposals were made.
\end{proof}

\prefix

\begin{proof}
  The school-proposing deferred acceptance algorithm is somewhat
  underspecified. In particular, if multiple schools have space
  remaining, the \emph{order} in which those schools make proposals
  isn't predetermined. But, by Lemma~\ref{lem:equiv} shows that
  reordering of the same proposals from the schools will arrive at the
  same matching. Thus, it suffices to show, for a fixed ordering of
  the entire set of proposals made by DA, that each intermediate
  matching is school-dominant.

  Let $t$ denote the time at which we wish to halt a run of DA. Let
  $P_{u,t}$ denote the set of proposals which school $u$ has made
  according some fixed ordering up to time $t$, and $P_{u}$ denote set
  of proposals made by school $u$ according to the entire run of
  DA. Let $\mu_t$ denote the ``current'' matching according to the
  first run of DA stopped at time $t$ and $\mu$ denote the final
  outcome of DA.

  Consider any school $A$. Notice that, since $|P_{u}| \geq |P_{u,
    t}|$, by the definition of DA,

  \begin{align}P_{u,t} \subseteq
    P_{u}\label{eqn:subset}\end{align}

  since, for a given school, the proposing order is just working down
  their preference list.

  Now consider a particular school $u$. We must show that for each $a
  \in \mu_t(u) \setminus \mu(u), a' \in \mu(u) \setminus \mu_{t}(u)$,
  $a \succ_u a'$. If $a$ was proposed to by $u$ in $P_t$ and rejects
  $u$, then $u$ will be rejected by $a$ when she receives a superset
  $P$ of proposals. Thus, the only students $u$ has according to $\mu$
  but not $\mu_t$ are students $z\in P_{u} \setminus P_{u,t}$
  (students who are proposed to after time $t$). But, by the
  definition of $DA$, if $u$ proposes to two students $a$ and $a'$,
  and proposes to $a$ before $a'$, $a \succ_u a'$, as desired.
\end{proof}

\section{Private Matching Algorithms Must Allow Empty Seats}
In this paper, we gave an algorithm with strong worst-case incentive properties in large markets, without needing to make distributional assumptions about the agents preferences, or requiring any other ``large market'' condition other than that the capacities of the schools be sufficiently large. However, in exchange, we had to relax our notion of stability to an approximate notion which allows a small number of empty seats per school. We here give an example demonstrating why this relaxation is necessary for any differentially private matching algorithm. An algorithm that must return an -exactly- stable matching must have extremely high sensitivity to the change in preferences of any single agent, if preferences are allowed to be worst case.

\begin{example}\label{example:empty}
Suppose there are $n$ students and $2$ schools, $H$ and $Y$. Suppose, for students $1
\leq a \leq \frac{n}{2}$, $H \succ_{a} Y$, and for $\frac{n}{2} < a
\leq n$, $Y \succ_{a} H$. Each school has capacity for exactly half
of the students: $C_H = C_Y = \frac{n}{2}$. Suppose $Y$ has preference
ordering $\succ_Y$, $s_1 \succ_Y s_2 \succ_Y \ldots \succ_y s_n$; $H$
has preference ordering $s_{\frac{n}{2}+1} \succ_H s_{\frac{n}{2}+2}
\succ_H \ldots s_n \succ_H s_1 \succ_H \ldots \succ_H
s_{\frac{n}{2}}$. The school-optimal matching matches students $s_1,
\ldots, s_{\frac{n}{2}}$ to $Y$ and $s_{\frac{n}{2} +1}, \ldots, s_n$
to $H$. Now consider the market with any single student removed. The school-optimal
stable matching changes entirely (i.e. every single student is matched to a different school). For example, if $s_1$ is
removed, $Y$ will admit $s_{\frac{n}{2}+1}$ (who will accept), $H$
will admit $s_{2}$ (who will accept), $Y$ will admit $s_{\frac{n}{2} +
  2}$ and so on. In the end, each student will get her favorite
school, and the schools will swap students. The same effect is achieved by having a single student \emph{change} her preferences, by reporting that she prefers to be unmatched than to be matched to her second choice school. This example shows that the exact
school-optimal matching is \emph{highly} sensitive to the addition,
removal, or alteration of preferences of a single student and hence impossible to achieve under differential privacy. Our algorithms blunt this kind of sensitivity via the use of a small budget of seats that we may leave empty. 
\end{example}

\end{document}